\pgfplotsset{width=7cm,compat=1.8}
\tikzset{
	string/.style={draw=#1, postaction={decorate}, decoration={markings,mark=at position .51 with {\arrow[color=#1]{>}}}},
	costring/.style={draw=#1, postaction={decorate}, decoration={markings,mark=at position .51 with {\arrow[draw=#1]{<}}}},
	ostring/.style={draw=#1, postaction={decorate}, decoration={markings,mark=at position .47 with {\arrow[draw=#1]{>}}}},
	ustring/.style={draw=#1, postaction={decorate}, decoration={markings,mark=at position .56 with {\arrow[draw=#1]{>}}}},
	oostring/.style={draw=#1, postaction={decorate}, decoration={markings,mark=at position .43 with {\arrow[draw=#1]{>}}}},
	uustring/.style={draw=#1, postaction={decorate}, decoration={markings,mark=at position .59 with {\arrow[draw=#1]{>}}}},
	directed/.style={string=blue!50!black}, 
	odirected/.style={ostring=blue!50!black}, 
	udirected/.style={ustring=blue!50!black}, 
	oodirected/.style={oostring=blue!50!black}, 
	uudirected/.style={uustring=blue!50!black},     
	redirected/.style={costring= blue!50!black},
	redirectedgreen/.style={costring= green!50!black},
	directedgreen/.style={string= green!50!black},
	redirectedlightgreen/.style={costring= green!65!black},
	directedlightgreen/.style={string= green!65!black},
	redirectedred/.style={costring= red!50!black},
	directedred/.style={string= red!50!black}%
}
\tikzset{-dot-/.style={decoration={
			markings,
			mark=at position 0.5 with {\fill circle (1.875pt);}},postaction={decorate}}}
\tikzset{
	Fdot/.style={circle, draw, fill, inner sep=0pt}, 
	Odot/.style={circle, draw, inner sep=0.1pt, minimum size=0.1cm}
}
\newcommand\tikzzbox[1]
\DeclareRobustCommand{\cev}[1]{%
	{\mathpalette\do@cev{#1}}%
}
\newcommand{\do@cev}[2]{%
	\vbox{\offinterlineskip
		\sbox\z@{$\m@th#1 x$}%
		\ialign{##\cr
			\hidewidth\reflectbox{$\m@th#1\vec{}\mkern4mu$}\hidewidth\cr
			\noalign{\kern-\ht\z@}
			$\m@th#1#2$\cr
		}%
	}%
}
\let\emph\undefined
\newcommand{\emph}[1]{\textsl{#1}}
\numberwithin{equation}{section}
\newtheoremstyle{style1}
{13pt}
{13pt}
{}
{}
{\normalfont\bfseries}
{.}
{.5em}
{}
\theoremstyle{style1}
\newtheorem{definition}[equation]{Definition}
\newtheorem{example}[equation]{Example}
\newtheorem{remark}[equation]{Remark}
\newtheoremstyle{style2}
{13pt}
{13pt}
{\slshape}
{}
{\normalfont\bfseries}
{.}
{.5em}
{}
\theoremstyle{style2}
\newtheorem{theorem}[equation]{Theorem}
\newtheorem{proposition}[equation]{Proposition}
\newtheorem{conjecture}[equation]{Conjecture}
\newtheorem{expectation}[equation]{Expectation}
\newtheorem{proposal}[equation]{Proposal}
\definecolor{Blue} {rgb} {0.282352,0.239215,0.803921}
\definecolor{Green} {rgb} {0.133333,0.545098,0.133333}
\definecolor{Red}   {rgb} {0.803921,0.000000,0.000000}
\definecolor{Violet}{rgb} {0.580392,0.000000,0.827450}
\newcounter{jfc}
\newcommand{\R}{\mathbb{R}}
\newcommand{\Z}{\mathbb{Z}}
\newcommand{\Ca}{\mathcal{C}}
\newcommand{\Vect}{{\mathsf{Vect}}}
\newcommand{\ev}{{\mathrm{ev}}}
\newcommand{\coev}{{\mathrm{coev}}}
\newcommand{\Cat}{\operatorname{\mathscr{C}at}}
\newcommand{\End}{\mathsf{End}}
\newcommand{\Hom}{\mathsf{Hom}}
\newcommand{\id}{\text{id}}
\newcommand{\Bord}{\mathrm{Bord}}
\let\to\undefined
\newcommand{\to}{\longrightarrow}
\newcommand{\AdjCat}{{\mathsf{AdjCat}}}
\newcommand{\RigidCat}{{\mathsf{RigidCat}}}
\newcommand{\op}{\operatorname{op}}
\DeclareMathSymbol{\Phiit}{\mathalpha}{letters}{"08} 
\DeclareMathSymbol{\Psiit}{\mathalpha}{letters}{"09}
\DeclareMathSymbol{\Sigmait}{\mathalpha}{letters}{"06}
\DeclareMathSymbol{\Xiit}{\mathalpha}{letters}{"04}
\DeclareMathSymbol{\Piit}{\mathalpha}{letters}{"05}\let\Pi\undefined\newcommand{\Pi}{\Piit}
\DeclareMathSymbol{\Gammait}{\mathalpha}{letters}{"00}
\DeclareMathSymbol{\Omegait}{\mathalpha}{letters}{"0A}
\DeclareMathSymbol{\Upsilonit}{\mathalpha}{letters}{"07}
\DeclareMathSymbol{\Thetait}{\mathalpha}{letters}{"02}
\title{On the Higher Categorical Structure of Topological Defects in Quantum Field Theories}
\author{Lukas Müller}
\begin{document}
\maketitle
\begin{abstract}
\noindent 
We propose a unifying mathematical framework describing the higher categorical structures formed by topological defects in quantum field theory equipped with tangential structures, such as orientations, framings, or $\operatorname{Pin}^{\pm}$-structures, in terms of structured versions of higher dagger categories. This recovers all previously known results, including the description of oriented topological defects in 2-dimensional quantum field theories by pivotal bicategories. 
Assuming the stratified cobordism hypothesis, we prove our proposal for topological defects with stable tangential structures that admit a direct sum in fully extended topological quantum field theories. 
\end{abstract}
\tableofcontents
\section{Introduction}
Topological defects are extended observables of a quantum field theory localized on submanifolds of spacetime whose value does not change under deformations of the submanifold they are supported on. Mathematically, their configurations are described by stratified manifolds. 
Over the last few years topological defects have received tremendous attention, since they can be interpreted as generalized or higher categorical symmetries, see~\cite{snow,GS2,GS1, FMT} for recent reviews. This has led to many exciting applications in physics and new developments in mathematics. 

Traditional symmetries can be used to construct codimension 1 topological defects such that the two sides of the defect differ by the symmetry. The idea behind generalized or categorical symmetries is that even if we drop the condition on invertibility and codimension 1 topological defects still retain many of the important features of symmetries, such as anomalies, conserved charges, gauging and can be used to, for example, constrain RG-flows.    

It is believed, and proven in low dimensions, that topological defects in quantum field theories assemble into a
higher category $\mathcal{D}$, which encodes many of their physical properties; see, for example,~\cite{defects} for a recent review. Quantum field theories are the objects of $\mathcal{D}$. 1-morphisms represent topological codimension 1 defects between these theories, with their composition given by fusion—the operation of bringing defects close together. Similarly, 2-morphisms correspond to codimension 2 topological defects between codimension 1 defects, and this pattern continues with higher morphisms representing higher codimensional defects. In practice, it is often too hard to determine the full category of \emph{all} topological defects and one has only access to simpler subcategories of $\mathcal{D}$. 
Usually these categories of topological defects are linear. We will mostly ignore this point in the present paper, because it has been discussed in detail before, see for example~\cite{kapustin2010topological,johnson2022classification} and focus on a less explored aspect of their structure. 

The dependence on tangential structures (see~\cite{debray2025global} for a discussion of the global structure), such as orientations or spin structures, for higher categories of topological defects remain an important open problem. This paper proposes a systematic mathematical framework to address it. Defects of different codimensions may each carry distinct tangential structures, and their interactions can be highly non-trivial. We use the approach of Ayala, Francis and Rozenblyum~\cite{AFR,AFR19} to define tangential structures on defects using the tangent bundle of conically smooth stratified spaces.     

So far the additional structure present on $\mathcal{D}$ depending on the tangential structure has only been studied for oriented defects in dimension two and three: In~\cite{dkr1107.0495, CMS} it is shown that oriented defects in oriented theories in dimensions two and three are encoded by pivotal bicategories and Gray categories with duals, respectively.

Let us explain the structure in dimension two in slightly more detail:
Every 1-morphism $X\colon \mathcal{Z}_1\to \mathcal{Z}_2$, i.e.\ a 1-dimensional topological defect between two quantum field theories, has a \textsl{left} and \textsl{right adjoints}, meaning there are 1-morphisms ${}^\vee\!X \colon \mathcal{Z}_2\to \mathcal{Z}_1$ and $X^\vee \colon \mathcal{Z}_2\to \mathcal{Z}_1$ both given by reversing the orientation of $X$ together with 2-morphisms 
\begin{equation} 
	\label{eq:AdjMaps}
	\ev_X  = 
	\tikzzbox{%
		\begin{tikzpicture}[very thick,scale=1.0,color=blue!50!black, baseline=0.5cm]
			\coordinate (X) at (0.5,0);
			\coordinate (Xd) at (-0.5,0);
			\coordinate (d1) at (-1,0);
			\coordinate (d2) at (+1,0);
			\coordinate (u1) at (-1,1.25);
			\coordinate (u2) at (+1,1.25);
			%
			\fill [orange!40!white, opacity=0.7] (d1) -- (d2) -- (u2) -- (u1); 
			\draw[thin] (d1) -- (d2) -- (u2) -- (u1) -- (d1); 
			%
			\draw[directed] (X) .. controls +(0,1) and +(0,1) .. (Xd);
			%
			\fill (X) circle (0pt) node[below] {{\small $X\vphantom{X^\vee}$}};
			\fill (Xd) circle (0pt) node[below] {{\small ${}^\vee\!X$}};
			\fill[red!80!black] (0,0.3) circle (0pt) node {{\scriptsize $\mathcal{Z}_1$}};
			\fill[red!80!black] (0.8,1) circle (0pt) node {{\scriptsize $\mathcal{Z}_2$}};
		\end{tikzpicture}
	}
	\, , \quad 
	\coev_X = 
	\tikzzbox{%
		\begin{tikzpicture}[very thick,scale=1.0,color=blue!50!black, baseline=0.5cm]
			\coordinate (X) at (-0.5,1.25);
			\coordinate (Xd) at (0.5,1.25);
			\coordinate (d1) at (-1,0);
			\coordinate (d2) at (+1,0);
			\coordinate (u1) at (-1,1.25);
			\coordinate (u2) at (+1,1.25);
			%
			\fill [orange!40!white, opacity=0.7] (d1) -- (d2) -- (u2) -- (u1); 
			\draw[thin] (d1) -- (d2) -- (u2) -- (u1) -- (d1);
			%
			\draw[redirected] (X) .. controls +(0,-1) and +(0,-1) .. (Xd);
			%
			\fill (X) circle (0pt) node[above] {{\small $X\vphantom{X^\vee}$}};
			\fill (Xd) circle (0pt) node[above] {{\small ${}^\vee\!X$}};
			\fill[red!80!black] (0,0.9) circle (0pt) node {{\scriptsize $\mathcal{Z}_2$}};
			\fill[red!80!black] (0.8,0.2) circle (0pt) node {{\scriptsize $\mathcal{Z}_1$}};
		\end{tikzpicture}
	}
	\quad\textrm{and}\quad 
	\tilde{\ev}_X  = 
	\tikzzbox{%
		\begin{tikzpicture}[very thick,scale=1.0,color=blue!50!black, baseline=0.5cm]
			\coordinate (X) at (0.5,0);
			\coordinate (Xd) at (-0.5,0);
			\coordinate (d1) at (-1,0);
			\coordinate (d2) at (+1,0);
			\coordinate (u1) at (-1,1.25);
			\coordinate (u2) at (+1,1.25);
			%
			\fill [orange!40!white, opacity=0.7] (d1) -- (d2) -- (u2) -- (u1); 
			\draw[thin] (d1) -- (d2) -- (u2) -- (u1) -- (d1);
			%
			\draw[redirected] (X) .. controls +(0,1) and +(0,1) .. (Xd);
			%
			\fill (Xd) circle (0pt) node[below] {{\small $X\vphantom{X^\vee}$}};
			\fill (X) circle (0pt) node[below] {{\small $X^\vee$}};
			\fill[red!80!black] (0,0.3) circle (0pt) node {{\scriptsize $\mathcal{Z}_2$}};
			\fill[red!80!black] (0.8,1) circle (0pt) node {{\scriptsize $\mathcal{Z}_1$}};
		\end{tikzpicture}
	}
	\, , \quad 
	\tilde{\coev}_X = 
	\tikzzbox{%
		\begin{tikzpicture}[very thick,scale=1.0,color=blue!50!black, baseline=0.5cm]
			\coordinate (X) at (-0.5,1.25);
			\coordinate (Xd) at (0.5,1.25);
			\coordinate (d1) at (-1,0);
			\coordinate (d2) at (+1,0);
			\coordinate (u1) at (-1,1.25);
			\coordinate (u2) at (+1,1.25);
			%
			\fill [orange!40!white, opacity=0.7] (d1) -- (d2) -- (u2) -- (u1); 
			\draw[thin] (d1) -- (d2) -- (u2) -- (u1) -- (d1);
			%
			\draw[directed] (X) .. controls +(0,-1) and +(0,-1) .. (Xd);
			%
			\fill (Xd) circle (0pt) node[above] {{\small $X\vphantom{X^\vee}$}};
			\fill (X) circle (0pt) node[above] {{\small $X^\vee$}};
			\fill[red!80!black] (0,0.9) circle (0pt) node {{\scriptsize $\mathcal{Z}_1$}};
			\fill[red!80!black] (0.8,0.2) circle (0pt) node {{\scriptsize $\mathcal{Z}_2$}};
		\end{tikzpicture}
	} 
	\, , 
\end{equation}
respectively. These satisfy the \emph{snake identities}
\begin{equation}
	\label{eq:ZorroMoves}
	\tikzzbox{%
		\begin{tikzpicture}[very thick,scale=1.0,color=blue!50!black, baseline=0cm]
			\coordinate (A) at (-1,1.25);
			\coordinate (A2) at (1,-1.25);
			\coordinate (d1) at (-1.5,-1.25);
			\coordinate (d2) at (+1.5,-1.25);
			\coordinate (u1) at (-1.5,1.25);
			\coordinate (u2) at (+1.5,1.25);
			%
			\fill [orange!40!white, opacity=0.7] (d1) -- (d2) -- (u2) -- (u1); 
			\draw[thin] (d1) -- (d2) -- (u2) -- (u1) -- (d1);
			%
			\draw[directed] (0,0) .. controls +(0,-1) and +(0,-1) .. (-1,0);
			\draw[directed] (1,0) .. controls +(0,1) and +(0,1) .. (0,0);
			\draw (-1,0) -- (A); 
			\draw (1,0) -- (A2); 
			%
			\fill ($(A)+(0.1,0)$) circle (0pt) node[below left] {{\small $X$}};
			\fill ($(A2)+(-0.1,0)$) circle (0pt) node[above right] {{\small $X$}};
			\fill[red!80!black] (-0.5,0) circle (0pt) node {{\scriptsize $\mathcal{Z}_2$}};
			\fill[red!80!black] (+0.5,0) circle (0pt) node {{\scriptsize $\mathcal{Z}_1$}};
		\end{tikzpicture}
	}
	= 
	\tikzzbox{%
		\begin{tikzpicture}[very thick,scale=1.0,color=blue!50!black, baseline=0cm]
			\coordinate (A) at (0,1.25);
			\coordinate (A2) at (0,-1.25);
			\coordinate (d1) at (-1,-1.25);
			\coordinate (d2) at (+1,-1.25);
			\coordinate (u1) at (-1,1.25);
			\coordinate (u2) at (+1,1.25);
			%
			\fill [orange!40!white, opacity=0.7] (d1) -- (d2) -- (u2) -- (u1); 
			\draw[thin] (d1) -- (d2) -- (u2) -- (u1) -- (d1);
			%
			\draw (A) -- (A2); 
			%
			\fill ($(A2)+(-0.1,0)$) circle (0pt) node[above right] {{\small $X$}};
			\fill[red!80!black] (-0.5,0) circle (0pt) node {{\scriptsize $\mathcal{Z}_1$}};
			\fill[red!80!black] (+0.5,0) circle (0pt) node {{\scriptsize $\mathcal{Z}_2$}};
		\end{tikzpicture}
	}
	\, , \quad 
	\tikzzbox{%
		\begin{tikzpicture}[very thick,scale=1.0,color=blue!50!black, baseline=0cm]
			\coordinate (A) at (1,1.25);
			\coordinate (A2) at (-1,-1.25);
			\coordinate (d1) at (-1.5,-1.25);
			\coordinate (d2) at (+1.5,-1.25);
			\coordinate (u1) at (-1.5,1.25);
			\coordinate (u2) at (+1.5,1.25);
			%
			\fill [orange!40!white, opacity=0.7] (d1) -- (d2) -- (u2) -- (u1); 
			\draw[thin] (d1) -- (d2) -- (u2) -- (u1) -- (d1);
			%
			\draw[directed] (0,0) .. controls +(0,1) and +(0,1) .. (-1,0);
			\draw[directed] (1,0) .. controls +(0,-1) and +(0,-1) .. (0,0);
			\draw (-1,0) -- (A2); 
			\draw (1,0) -- (A); 
			%
			\fill ($(A)+(-0.1,0)$) circle (0pt) node[below right] {{\small ${}^\vee X$}};
			\fill ($(A2)+(0.1,0)$) circle (0pt) node[above left] {{\small ${}^\vee X$}}; 
			\fill[red!80!black] (-0.5,0) circle (0pt) node {{\scriptsize $\mathcal{Z}_1$}};
			\fill[red!80!black] (+0.5,0) circle (0pt) node {{\scriptsize $\mathcal{Z}_2$}};
		\end{tikzpicture}
	}
	= 
	\tikzzbox{%
		\begin{tikzpicture}[very thick,scale=1.0,color=blue!50!black, baseline=0cm]
			\coordinate (A) at (0,1.25);
			\coordinate (A2) at (0,-1.25);
			\coordinate (d1) at (-1,-1.25);
			\coordinate (d2) at (+1,-1.25);
			\coordinate (u1) at (-1,1.25);
			\coordinate (u2) at (+1,1.25);
			%
			\fill [orange!40!white, opacity=0.7] (d1) -- (d2) -- (u2) -- (u1); 
			\draw[thin] (d1) -- (d2) -- (u2) -- (u1) -- (d1);
			%
			\draw (A) -- (A2); 
			%
			\fill ($(A2)+(-0.1,0)$) circle (0pt) node[above right] {{\small ${{}^\vee X}$}};
			\fill[red!80!black] (-0.5,0) circle (0pt) node {{\scriptsize $\mathcal{Z}_2$}};
			\fill[red!80!black] (+0.5,0) circle (0pt) node {{\scriptsize $\mathcal{Z}_1$}};
		\end{tikzpicture}
	}
\end{equation} 
for the left adjoint, and analogous conditions for the right adjoint. 

The defining property of a pivotal bicategory is that left and right adjoints agree, which is specific to oriented defects. 
There is an interesting and fruitful interplay between categorical structures and the type of allowed defect: The `bending' of strata, allows to form diagrams in a pivotal bicategory~\cite{cr1210.6363} and in 3-dimensions in a Gray category with duals~\cite{BMS} which cannot be evaluated in ordinary higher categories such as
\begin{equation}\label{Page9}
	\tikzzbox{
		\begin{tikzpicture}[very thick,scale=0.7,color=blue!50!black, baseline=-0.1cm]
			\fill [orange!40!white, opacity=0.7] (-1.7,-1.7) -- (-1.7,1.7) -- (1.7,1.7) -- (1.7,-1.7); 
			\draw[thin] (-1.7,-1.7) -- (-1.7,1.7) -- (1.7,1.7) -- (1.7,-1.7) -- (-1.7,-1.7);
			\fill (-135:1.32) circle (0pt) node[red!80!black] {{\scriptsize$\mathcal{Z}_2$}};
			\fill (-135:0.55) circle (0pt) node[red!80!black] {{\scriptsize$\mathcal{Z}_1$}};
			\draw (0,0) circle (0.95);
			\fill (45:1.32) circle (0pt) node {{\small$X$}};
			\draw[<-, very thick] (0.100,-0.95) -- (-0.101,-0.95) node[above] {}; 
			\draw[<-, very thick] (-0.100,0.95) -- (0.101,0.95) node[below] {}; 
			\fill (0:0.95) circle (3.5pt) node[left] {{\small$\chi$}};
		\end{tikzpicture} 
	} 
	\quad \quad  \quad  \quad \text{or} \quad \quad \quad \quad \quad 	
	\tikzzbox{\begin{tikzpicture}[thick,scale=1.961,color=blue!50!black, baseline=0.8cm, >=stealth, 
			style={x={(-0.6cm,-0.4cm)},y={(1cm,-0.2cm)},z={(0cm,0.9cm)}}]
			\coordinate (Lb) at (0, 0, 0);
			\coordinate (Rb) at (0, 1.5, 0);
			\coordinate (Rt) at (0, 1.5, 1);
			\coordinate (Lt) at (0, 0, 1);
			\coordinate (d) at (0, 0.45, 0.5);
			\coordinate (b) at (0, 1.05, 0.5);
			\fill [orange!80,opacity=0.545] (Lb) -- (Rb) -- (Rt) -- (Lt);
			\fill[color=red!80!black] (d) circle (0pt) node[left] (0up) { {\scriptsize$X$} };
			\fill[inner color=green!30!white,outer color=green!55!white, very thick, rounded corners=0.5mm] (d) .. controls +(0,0,0.4) and +(0,0,0.4) .. (b) -- (b) .. controls +(0,0,-0.4) and +(0,0,-0.4) .. (d);
			\draw[color=red!80!black, ultra thick, rounded corners=0.5mm, postaction={decorate}, decoration={markings,mark=at position .51 with {\arrow[draw=red!80!black]{>}}}] (d) .. controls +(0,0,0.4) and +(0,0,0.4) .. (b);
			\draw[color=red!80!black, ultra thick, rounded corners=0.5mm, postaction={decorate}, decoration={markings,mark=at position .53 with {\arrow[draw=red!80!black]{<}}}] (d) .. controls +(0,0,-0.4) and +(0,0,-0.4) .. (b);
			\fill[color=red!80!black] (0, 1.35, 0.15) circle (0pt) node (0up) { {\scriptsize$u$} };
			\fill[color=green!50!black] (0, 0.85, 0.4) circle (0pt) node (0up) { {\scriptsize$w$} };
			\fill[color=red!80!black, opacity=0.2] (0, 0.65, 0.6) circle (0pt) node (0up) { {\scriptsize$u$} };
	\end{tikzpicture}}
	\ \ ,
\end{equation}     
respectively. In these pictures all strata are oriented in a compatible way. 

Another instructive and simple example is the structure of topological defects in unoriented 1-dimensional quantum field theories. We can assign to every 1-morphism $f\colon \mathcal{Z}_1\to \mathcal{Z}_2$ corresponding to a local defect configuration of the form 
\begin{center}
\begin{overpic}[scale=1]{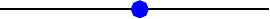} 
\put(35,10){{$\mathcal{Z}_1$}}
\put(90,10){{$\mathcal{Z}_2$}}
\put(65,13){$f$}
\end{overpic}
\end{center}
another defect $f^\dagger\colon \mathcal{Z}_2\to \mathcal{Z}_1$ by reading the picture right to left, which is impossible for oriented theories. The source of $f$ and $f^\dagger$ differ and hence it does not make sense for them to be equal. The operation $(-)^\dagger$ reverses the order of composition, i.e.\ $(f\circ g)^\dagger= g^\dagger \circ f^\dagger$ and squares to the identity: $(f^{\dagger})^\dagger=f$. We conclude that the category of topological defects for unoriented 1-dimensional theories is equipped with a functor $\dagger\colon \mathcal{D}\to \mathcal{D}^{\op}$ which is the identity on objects and squares to the identity. A category with such a functor is called a dagger category~\cite{DefDag}. 

Often this additional $\dagger$-structure is the only way of distinguishing oriented and unoriented defect categories. For example for topological field theories with target $\Vect$ the oriented 
defect category agrees with the category of finite dimensional vector spaces and linear maps whereas the unoriented defect category corresponds to vector spaces equipped with a symmetric non-degenerate pairing and arbitrary linear maps between them. 
As categories, these two categories are equivalent because every finite dimensional vector space admits a non-degenerate symmetric pairing. The distinction is that the latter is equipped with a natural dagger operation given by the adjoint for the pairings. 

Both examples for the additional structures present on categories of topological defects have the common feature that they involve conditions which are unnatural from a categorical perspective, namely they require objects or 1-morphisms to be equal instead of isomorphic: $(-)^\dagger$ is the identity on objects and the left and right adjoint agree. We believe that this is a feature rather than a bug of the construction and any general mathematical framework to describe defects should have it. 

In this paper, we propose an abstract mathematical framework systematically encoding the dependence on arbitrary types of tangential structure, which should be closely related to the envisioned notion of an $H$-pivotal category by Kevin Walker~\cite{walker2021universal}. Concretely, we define structured version of higher dagger categories~\cite{higherdagger} depending on stratified tangential structures. All examples from this introduction turn out to be special cases of the general definition. 

 \vspace*{0.2cm}\textsc{Acknowledgments.} 
I am grateful to my collaborators Nils Carqueville, Cameron Krulewski, Dave Penneys, Claudia Scheimbauer, Pelle Steffens, Luuk Stehouwer, and David Reutter for the enjoyable collaborations which inspired this paper and, especially, to Theo Johnson-Freyd for invaluable discussions on technical and conceptional aspects of it.      
I thank Severin Bunk and Lukas Woike
for helpful discussions related to this project and comments on an early draft. 
I gratefully acknowledge support of the Simons Collaboration on Global Categorical Symmetries. Research at Perimeter Institute is supported in part by the Government of Canada through the Department of Innovation, Science and Economic Development and by the Province of Ontario through the Ministry of Colleges and Universities.

\section{Tangential structures for stratified manifolds}\label{Sec: Tangential structures}
The geometry of topological defects is best encoded using stratified manifolds~\cite{AFT} where every stratum carries a label indicating the specifics of the defect located on it. 
In this section, we will review how to equip stratified manifolds with tangential structures following~\cite{AFR}.  

Throughout this paper, a stratified manifold will always be a conically smooth manifold as defined in~\cite{AFT}. Roughly, these are inductively defined as stratified topological spaces together with a (maximal) covering by opens of the form $\R^j\times C(Y)$, where $C(Y)$ is the cone over a conically smooth stratified manifold of lower depth, with smooth transition functions. Here are a few examples:
\begin{center}
\includegraphics[scale=1.2]{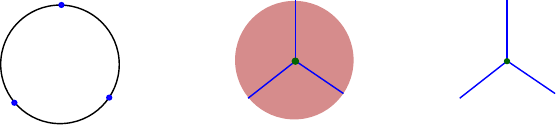}
\end{center}
The second stratified manifold is the cone on the first and the last is constructed by forgetting the top strata of the second. 

To any stratified topological space $Y$ we can associate its $(\infty,1)$ exit path category~\cite[Appendix A.5]{ha} whose objects are the points of $Y$ and whose morphisms are path which are allowed to exit lower strata to enter higher strata, but are not allowed to return. For example the exit path category of a codimension 1 plane in $\R^n$ is equivalent to the walking span 
\begin{equation}
    \begin{tikzcd}
        & * \ar[ld] \ar[rd] & \\
        * & & * 
    \end{tikzcd} \ \ . 
\end{equation}
The core $\iota_0 (\operatorname{Exit}(Y))$ of the exit path category is the disjoint union of the $\infty$-groupoids associated to the strata of $Y$. In~\cite{AFR19} Ayala, Francis and Rozenblyum show that conically smooth stratified manifolds have a tangent bundle classified by a functor $TY\colon \operatorname{Exit}(Y)\to \mathsf{Vect}_{\operatorname{inj}}$ where $\mathsf{Vect}_{\operatorname{inj}}$ is the $(\infty,1)$-category represented by the topological category of vector spaces and injections. Restricting the stratified tangent bundle to the core of the exit path category recovers the individual tangent bundles of the strata. However, the stratified tangent bundle also knows about the way those fit together. For example, if the stratification is simply an embedding of a submanifold $\Sigma \hookrightarrow M $, then the value on a path leaving $\Sigma$ is the combination of the embedding $T\Sigma \hookrightarrow TM$ with the value of the path in $TM$.     

To answer the question what are tangential structures for stratified manifolds, we  
recall that a space $X\to BO_n$ specifies a tangential structure on $n$-dimensional manifolds in terms of lifts 
\begin{equation}
\begin{tikzcd}
    & X \ar[d] \\ 
    M\ar[r,"TM",swap] \ar[ru] & BO_n
\end{tikzcd}
\end{equation}
of the map classifying the tangent bundle to $X$. 
Using the stratified tangent bundle we can define tangential structures as lifts against a conservative functor $\mathcal{B}\to \mathsf{Vect}_{\operatorname{inj}}$ from a $(\infty,1)$-category $\mathcal{B}$. 
\begin{example}\label{Ex: 1strata}
We look at the simplified situation describing  $k$-dimensional defects in $n$-dimensional spacetimes (see also~\cite{tetik2022stratified} for a detailed discussion of this example). For this, we denote by $\Vect_{\operatorname{inf}}^{k,n}$ the full subcategory of $\Vect_{\operatorname{inj}}$ on vector spaces of dimension $k$ and $n$. The stratified tangent bundles of a straification only containing $k$ and $n$ dimensional strata takes values in this category. This category has up to equivalence two objects with endomorphisms $O_k$ and $O_n$. 
The only other non-trivial hom space is $\Hom(\R^k,\R^n) \simeq O_n/O_{n-k}$. We look at categories $\mathcal{B}$ having the same shape, i.e. two objects whose endomorphisms from groups $G_k$ and $G_n$, respectively, as well as one space of non-trivial morphisms $X'$ which has an action of $G_k$ and $G_n$ by pre and post composition. A functor from this category to $\Vect_{\operatorname{inf}}^{k,n}$ is the same as group homomorphisms $G_k\to O_k$ and $G_n\to O_n$ as well as an equivariant map $X'\to  O_n/O_{n-k}$. This data can be specified by a map of spans
\begin{equation}
\begin{tikzcd}
  & X'/\hspace{-0.1cm}/ (G_k\times G_n) \ar[ld] \ar[rd] \ar[dd] & \\
  BG_k \ar[dd] & & BG_n \ar[dd] \\ 
  & (O_n/O_{n-k})/\hspace{-0.1cm}/(O_k\times O_n)\cong BO_k\times BO_{n-k} \ar[ld,"\operatorname{pr}",swap] \ar[rd,"\oplus"] & \\
  BO_k &  & BO_n
\end{tikzcd} \ \ , 
\end{equation} 
where $/\hspace{-0.1cm}/$ denotes the homotopy quotient. 
 Explicitly, a tangential structure of this type for a $k$-dimensional manifold $\Sigma$ embedded into an $n$-dimensional manifold $M$ consists of\footnote{I am grateful to David Reutter for explaining these structures to me long before I understood the general formalism.} 
\begin{itemize}
\item a $G_n$-structure on $M\setminus \Sigma$, 
\item a $G_k$-structure on $\Sigma$,
\item $\Sigma$ does not only have a tangent bundle but also a normal bundle in $M$ leading to a map $\Sigma \to BO_k\times BO_{n-k}$ combining both. We need to specify a homotopy lift of this map to $X'/\hspace{-0.1cm}/ (G_k\times G_n)$. 
\item The previous choice defines a $G_n$-structure on a tubular neighborhood $\mathcal{U}$ of $\Sigma$. We need to specify a homotopy between this tangential structure and the restriction of the $G_n$ structure on $M\setminus \Sigma$ to $\mathcal{U}$. It also defines a $G_k$-structure on $\Sigma$ and we need to specify a homotopy between it and the $G_k$-structure fixed above. 
\end{itemize}
Often it is natural to consider tangential structures where the structure on the tubular neighborhood fixes the structure on $\Sigma$ uniquely.

As a concrete example of a stratified tangential structure we take $G_n=G_k=X'=*$ leading to the notion of a vari framing introduced by Ayala-Francis-Rozenblyum~\cite{AFR}, which consists of a framing on $M$, a tangential and normal framing of $\Sigma$, and a homotopy between the sum of the normal and tangential framing and the ambient framing of $M$. 
\end{example}
Note that without loss of generality we can assume that all endomorphisms in $\mathcal{B}$ are isomorphisms which is ensured by requiring the functor $\mathcal{B}\to \Vect_{\operatorname{inj}}$ to be conservative. For us the following definition will be important. 
\begin{definition} Let $n$ be a positive integer.
A \emph{stratifed $n$-dimensional tangential structure} is a conservative functor $\mathcal{B}_n\to \Vect_{\operatorname{inj}}^{\leq n}$, where $\Vect_{\operatorname{inj}}^{\leq n}$ is the category of at most $n$-dimensional vector spaces and injections.     
\end{definition}
We can give a description of this structure in terms similar to Example~\ref{Ex: 1strata}. 
\begin{remark}\label{Rem: X}
A stratified $n$-dimensional tangential structure consists of 
\begin{itemize}
    \item For all $0\leq i\leq n$ a space $X_i$ over $BO_i$ 
    \item For all $0\leq i < k \leq n $ a map of spans of spaces 
    
    \begin{equation}
\begin{tikzcd}
  & X_{i,k} \ar[ld] \ar[rd] \ar[d] & \\
  X_k \ar[d] & BO_k\times BO_{i-k} \ar[ld,"\operatorname{pr}",swap] \ar[rd,"\oplus"] & X_i \ar[d]  \\
  BO_k &  & BO_i
\end{tikzcd}
\end{equation}
\item For all $0\leq l < i < k \leq n $ maps of maps of spans $X_{l,i}\times_{X_i} X_{i,k} \to X_{l,k}$ 
    \item and so one up to an $(n-1)$-morphism filling a diagram of various ways to relate $X_{0,1}\times_{X_1} X_{1,2}\times_{X_2} \dots \times_{X_{n-1}} X_{(n-1),n}$ with $X_{0,n}$. 
\end{itemize}
\end{remark}

\begin{example}\label{Ex: Hn}
To every stable tangential structure $BH\to BO$ we can assign a stratified $n$-dimensional tangential structure called $\Vect_{\operatorname{inj}}^{H_n}$. Its objects are $k$-dimensional vector spaces $V$ with an $H_k$-structure, i.e. an $H_k$-torsor $P$ equipped with an equivariant map to the orthogonal frames on $V$, for $k\leq n$. 
Endomorphisms are maps of $H_k$-structures and maps between vector spaces of different dimensions are just injections. The corresponding tangential structure equips all $k$-strata with an $H_k$ structure without imposing any compatibility relations between different strata. The tangential structures $\Vect_{\operatorname{inj}}^{SO_n}$ are the most common ones in the context of topological defects.     

If $H$ admits a direct sum operation the corresponding maps of spans are 
 \begin{equation}
\begin{tikzcd}
  & BH_i\times BH_{i-k} \ar[ld] \ar[rd, "\oplus"] \ar[d] & \\
  BH_k \ar[d] & BO_k\times BO_{i-k} \ar[ld,"\operatorname{pr}",swap] \ar[rd,"\oplus"] & BH_i \ar[d]  \\
  BO_k &  & BO_i
\end{tikzcd} \ \ . 
\end{equation}
\end{example}

\begin{example}
Ayala, Francis, and Rozenblyum introduce two different notions of framings~\cite{AFR}: 
\begin{itemize}
    \item Vari framings corresponding to 
    \begin{align}
       \mathbb{N}^{\leq n} \coloneqq \{ 0\rightarrow 1 \rightarrow\dots \rightarrow n \} & \longrightarrow \Vect_{\operatorname{inj}}^{\leq n} \\ 
        i & \longmapsto \R^i
    \end{align} which maps the non-trivial map $i\longrightarrow i+k$ to the embedding of $\R^i $ into $\R^{i+k}$ as the first $i$-coordinates. In particular, for a vari-framed stratified manifold  every $i$-stratum is equipped with an $i$-framing. 
    \item Solid framings corresponding to the functor $\Vect_{\operatorname{inj}}/\R^n \to \Vect_{\operatorname{inj}}^{\leq n}$. Every vari-framing is also a solid framing, but many stratified manifolds admit a solid framing but no vari-framing, such as a circle embedded into $\R^2$. 
\end{itemize}
\end{example}

A \emph{pointed} stratified tangential structure is a tangential structure $\mathcal{B}_n$ together with the choice of compatible basepoints for all the spaces $X_i$, and $X_{ij}$ from Remark~\ref{Rem: X} which are preserved by all maps. A pointing can be reformulated as a map of tangential structures from vari-framings
\begin{equation}
    \begin{tikzcd}
        \mathbb{N}^{\leq n} \ar[rr]\ar[rd] & & \mathcal{B}_n \ar[ld] \\ 
         & \Vect_{\operatorname{inj}}^{\leq n} &  \ \ . 
    \end{tikzcd}
\end{equation}

\section{Structured higher dagger categories}
Let $\mathcal{B}_n$ be a stratifed $n$-dimensional tangential structure.
In this section we define the notions of $\mathcal{B}_n$-dagger categories and rigid symmetric monoidal (rigid for short) $\mathcal{B}_n$-dagger categories. 
For $n>2$, the first definition relies on an unproven conjecture. However, the definition of rigid dagger categories only relies on the cobordism hypothesis and even the case $n=2$ leads to many novel concepts and examples. 
Recall that an $n$-category $\Ca$ has \emph{adjoints} if all its $(<n)$-morphism have a left and right adjoint. 
We start by reformulating the definition of $O_n$-dagger categories from~\cite[Definition 5.7 and 5.8]{higherdagger} in a way which straightforwardly generalizes to arbitrary (pointed) stratified tangential structures. The construction is based on the following conjecture regarding the structure of $n$-categories with adjoints:
\begin{conjecture}\label{Conj}
There is a functor $\AdjCat_{\bullet} \colon \Vect_{\operatorname{inj}}^{\operatorname{op}} \longrightarrow \Cat $ sending $\R^n$ to the $(\infty,1)$-category $\AdjCat_n$ of $(\infty,n)$-categories with adjoints and the inclusion $\R^i\to \R^n$ as the first $i$ coordinates to the functor $\iota_i\colon \AdjCat_n \to \AdjCat_i$ which sends an $n$-category with adjoints to the $i$-category with adjoints constructed by forgetting all non-invertible $(>i)$-morphisms.
\end{conjecture}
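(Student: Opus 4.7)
The plan is to build $\AdjCat_\bullet$ by combining, for each $n$, an $O_n$-action on $\AdjCat_n$ with a standard truncation functor $\AdjCat_n\to \AdjCat_i$, and then extending along the factorization of an arbitrary injection $\R^i\hookrightarrow \R^n$ as an isomorphism of $\R^n$ composed with the standard coordinate inclusion. Concretely, I would first fix a model of $(\infty,n)$-categories (for example complete $n$-fold Segal spaces) and define $\AdjCat_n$ as the sub-$(\infty,1)$-category spanned by those objects admitting adjoints for every $k$-morphism with $k<n$; since any functor between such objects automatically preserves adjoints up to coherent equivalence, no further restriction on the morphisms is required.

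The standard truncation $\iota_i^{\mathrm{std}}\colon \AdjCat_n \to \AdjCat_i$ associated to the standard inclusion $\R^i\hookrightarrow \R^n$ is then obtained by discarding all non-invertible $k$-morphisms with $k>i$, and one verifies that adjoints of $(<i)$-morphisms persist because the witnessing coherence data of lower categorical degree is retained. The genuinely non-trivial ingredient is the construction of an $O_n$-action on $\AdjCat_n$. At the component-group level this must reproduce the classically familiar dualities: for $n=1$ the non-trivial element of $O_1=\mathbb{Z}/2$ acts as $(-)^{\mathrm{op}}$, and for $n=2$ the group $O_2$ is generated by $(-)^{\mathrm{op}}$, $(-)^{\mathrm{co}}$, and a Serre-like rotation governed by double adjoints. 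The most robust approach I can see to produce the full coherent action is to exhibit $\AdjCat_n$ as the underlying $(\infty,1)$-category of an $(\infty,n+1)$-category $\mathfrak{Adj}\mathcal{C}at_n$ of $(\infty,n)$-categories with adjoints (with higher cells being adjointable transformations), verify that this ambient category itself has adjoints, and then invoke a version of the cobordism hypothesis for $(\infty,n+1)$-categories with adjoints to extract the $O_n$-action on the underlying $(\infty,1)$-category, generalizing Gaitsgory--Rozenblyum's construction of the $O(2)$-action in the $(\infty,2)$-case.

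The main obstacle is precisely this $O_n$-action: while the generating dualities and rotations are classically well understood, lifting them to an $E_\infty$-coherent action on the whole $(\infty,1)$-category $\AdjCat_n$ requires a non-monoidal analog of the cobordism hypothesis, which is why the statement is presented as a conjecture rather than a theorem. Granting the action, the remaining work is largely bookkeeping: one must verify equivariance of $\iota_i^{\mathrm{std}}$ along the subgroup inclusion $O_i\hookrightarrow O_n$ stabilising the last $n-i$ coordinates, check that $\iota_l^{\mathrm{std}} \circ \iota_i^{\mathrm{std}}=\iota_l^{\mathrm{std}}$ for any chain $\R^l\hookrightarrow \R^i\hookrightarrow \R^n$, and then assemble the pieces into an honest functor out of $\Vect_{\operatorname{inj}}^{\operatorname{op}}$ via a Segal-type or bar-resolution argument that handles the non-unique factorizations of a general injection in a coherent fashion.
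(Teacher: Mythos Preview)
The statement is a \emph{conjecture} in the paper, and the paper offers no proof. Its justification is purely heuristic: $n$-categories with adjoints should admit a graphical calculus in terms of vari-framed string diagrams in $\R^n$, and functoriality for injections should arise by extending a diagram constantly in the orthogonal directions; the restriction to $\Vect_{\operatorname{inj}}^{\leq 2}$ and $(2,2)$-categories is cited as a theorem. So there is no proof in the paper to compare against, only this string-diagram intuition.

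Your strategy is genuinely different from that heuristic. You want to manufacture the $O_n$-action on $\AdjCat_n$ by realising it as the $1$-truncation of an $(\infty,n{+}1)$-category $\mathfrak{Adj}\mathcal{C}at_n$ with adjoints and appealing to a cobordism-hypothesis-type mechanism, rather than by rotating string diagrams. This is more formally packaged, but the step you flag as the ``main obstacle'' is more problematic than you indicate: the cobordism hypothesis produces an $O_n$-action on a \emph{groupoid} of fully dualizable objects in a symmetric monoidal $(\infty,n)$-category, not on an $(\infty,1)$-category of objects in a non-monoidal $(\infty,n{+}1)$-category. The ``non-monoidal analog'' you invoke is at least as conjectural as the statement itself, and it is not evident that $\mathfrak{Adj}\mathcal{C}at_n$ has adjoints in the needed sense. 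The paper, by contrast, later sidesteps exactly this issue in the rigid symmetric monoidal case by representing $\iota_k$ as $\underline{\Hom}(\Bord_{n-k}^*;-)$ and reading the action off the source.

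There is also a concrete gap in your bookkeeping step. Factoring an arbitrary injection as an isomorphism of $\R^n$ followed by the standard inclusion only determines the isomorphism up to the stabiliser $O_{n-i}$ of the image. For the assignment to be well-defined you therefore need $\iota_i^{\mathrm{std}}$ to be $O_i\times O_{n-i}$-equivariant with $O_{n-i}$ acting trivially on $\AdjCat_i$, not merely $O_i$-equivariant as you state. The paper records precisely this $O_i\times O_{n-i}$-equivariance as a consequence of the conjecture; in your approach it is an additional claim you must establish before the Segal/bar assembly can proceed.
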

One reason to believe this conjecture is that $n$-categories with adjoints are expected to admit a graphical calculus in terms of (vari) framed higher string diagrams in $\R^n$. 
These should be functorial for injections between vector spaces by extending a higher string diagram as constant in the orthogonal direction.\footnote{In joint work in progress~\cite{MSS}, we will make this more precise.} 
Another justification is that Conjecture~\ref{Conj} is a theorem when restricted $\Vect_{\operatorname{inj}}^{\leq 2}$ and $(2,2)$-categories~\cite{2DuCH}.

A particular consequence of the conjecture is that $O_n$ acts on $\AdjCat_n$ (see also~\cite[Remark 4.4.10]{luriecobhyp}) allowing us to define \emph{$O_n$-volutive $n$-categories} as fixed points for this action. Being functorial in $\Vect_{\operatorname{inj}}$ implies that $\iota_i\colon \AdjCat_n \to \AdjCat_i$ is $O_i\times O_{n-i}$-equivariant where $O_{n-i}$ acts trivially on $\AdjCat_{i}$. 

The notion of a lax limit allows for a concise definition of $O_n$-dagger categories. To explain this in more detail, we recall first that an object of the limit of a functor $F\colon D \to \Cat $ consists of a collection of objects $c_d\in F(d)$ together with isomorphisms $F(f)[c_d]\cong c_{d'}$ for all 1-morphisms $f\colon d \to d' $, together with higher coherence isomorphisms. Formally the limit can be realized as the category of natural transformation from the constant diagram at the terminal category $*$. The lax limit of $F$ replaces the isomorphisms $F(f)[c_d]\cong c_{d'}$ with a not necessarily invertible morphism $c_{d'}\to F(f)[c_d]$, which can be realized by the category of oplax-natural transformations from the constant diagram at $*$~\cite{gepner2017lax, johnson2017op}.

Let $\AdjCat_{\bullet}^{\leq n}$ denote the restriction of $\AdjCat_{\bullet}$ to $\Vect_{\operatorname{inj}}^{\leq n}$. The lax limit $\lim^{\operatorname{Lax}} \AdjCat_{\bullet}^{\leq n}$ of this functor is an $(\infty,1)$-category with objects consisting of
\begin{itemize}
    \item An $O_i$-equivariant functor $*\to \AdjCat_i$, which can be identified with an $O_i$-volutive category $\Ca_i$ for all $0 \leq i \leq n$ together with 
    \item $O_i\times O_{k-i}$-volutive functors $\Ca_i\to \iota_i \Ca_k$ using the trivial $O_{k-i}$-volution on $\Ca_i$ and
    \item various coherence isomorphisms between these functors.
\end{itemize}
This reproduces the structure which is part of the definition of an $O_n$-dagger category, but we are missing some additional fully-faithfulness and essential surjectivity conditions. 
\begin{remark}
The fully-faithfulness condition requires some preparation (the reader can safely ignore these details):
Consider the diagram consisting of the  $(\infty,n-l)$-categories $(\iota_{n-l} \mathcal{C}_{n-j})^{O_{k_1} \times \dots \times O_{k_m}}$, where $(k_1, \dots, k_m)$ is a non-empty partition of $j$ for $j \leq l$.
There are two types of maps in the diagram: those forgetting fixed point data corresponding to $(k_1, \dots, k_i+k_{i+1} ,\dots , k_m) \rightsquigarrow (k_1, \dots, k_{i}, k_{i+1}, \dots, k_m)$ and those relating different $\Ca_i$'s corresponding to $(k_1, \dots, k_{m-1}) \rightsquigarrow (k_1, \dots, k_{m-1}, k_m)$.
We denote by $P_{n-l}(\mathcal{C})$ the pullback of this diagram. There is a canonical map $\mathcal{C}_{n-l} \to P_{n-l}(\mathcal{C})$. 
For example, for $n=3$ and $l=3$ the diagram is:
\begin{equation}\label{Eq: Faithful}
\begin{tikzcd}[sep=small]
 & \mathcal{C}_0\\
	&& P_0(\mathcal{C}) && {\iota_0\mathcal{C}_2^{O_2}} \\
	{\iota_0\mathcal{C}_3^{O_3}} &&& {\iota_0\mathcal{C}_3^{O_2\times O_1}} \\
	&& {\iota_0\mathcal{C}_1^{O_1}} && {\iota_0\mathcal{C}_2^{O_1 \times O_1}} \\
	{\iota_0\mathcal{C}_3^{O_1 \times O_2}} &&& {\iota_0\mathcal{C}_3^{O_1 \times O_1 \times O_1}}
	\arrow[from=3-1, to=5-1]
	\arrow[from=5-1, to=5-4]
	\arrow[from=4-3, to=5-1]
	\arrow[from=4-3, to=4-5]
	\arrow[from=4-5, to=5-4]
	\arrow[from=2-5, to=4-5]
	\arrow[from=2-5, to=3-4]
	\arrow[from=2-3, to=3-1]
	\arrow[from=2-3, to=2-5]
	\arrow[from=2-3, to=4-3]
	\arrow[from=3-1, to=3-4,crossing over]
	\arrow[from=3-4, to=5-4, crossing over]
        \arrow[from=1-2, to=2-3, dashed]
\end{tikzcd}
\end{equation}
\end{remark}
After these preparations we can reformulate the definition from~\cite{higherdagger}.
\begin{definition}\label{Def: O dagger}
An \emph{$O_n$-dagger category} is an element of $\lim^{\operatorname{Lax}} \AdjCat_{\bullet}^{\leq n}$ such that the functors $\Ca_i\to \iota_i\Ca_k$ are essentially surjective on $(\leq i)$-morphisms and the maps $\mathcal{C}_{i} \to P_{i}(\mathcal{C})$ are fully faithful on $(> i)$-morphisms. The category of $O_n$-dagger categories is the full subcategory of $\lim^{\operatorname{Lax}} \AdjCat_{\bullet}^{\leq n}$ on $O_n$-dagger categories.   
\end{definition}
\begin{remark}
Note that it does not make sense to ask for a trivialization of an $O_n$-volution, since it is a fixed point for a non-trivial action. However, this action trivializes on the core and hence we can trivialize it ``on objects". For $n=1$, these trivializations are called hermitian structures in~\cite[Appendix B]{freed2021reflection}.  
From this perspective the definition of $O_n$-dagger categories has a simple interpretation: An $O_n$-dagger category is an $O_n$-volutive category together with a trivialization of the volution on the space of objects, encoded by $\Ca_0$, a trivialization of the $O_{n-1}$-action on the space of 1-morphisms encoded by $\Ca_1$ and so on. 
The essential surjectivity conditions ensure that the $O_{n-i}$-volution is trivialized on \emph{all} $i$-morphisms, whereas the fully faithfulness conditions ensure that $\Ca_i$ does not contain any new information about $(>i)$-morphisms. 
\end{remark}

\begin{example}
For $n=1$, this recovers the traditional notion of a dagger category in its coherent reformulation introduced in~\cite{luukjan}. This is equivalent to the previously introduced strict definition of a dagger category as a category $\Ca$ equipped with a functor $\dagger\colon \Ca\to\Ca^{\op}$ which is the identity on objects such that $\dagger^2=\id$. Prominent examples are the category of Hilbert spaces and bordism categories~\cite{luukthesis}.
\end{example}
\begin{example}
For $n=2$, the resulting structure will be studied extensively in~\cite{2DuCH} (see also~\cite{TimNils}) where we show that examples include the bicategory of Baez super 2-Hilbert spaces and bordism bicategories. We conjecture, but do not prove, in~\cite{2DuCH} that a strict model for an $O_2$-dagger bicategory is a bicategory with adjoints $B$ equipped with 
\begin{enumerate}
    \item a functor $\dagger: B \to B^{2\op}$ such that $\dagger^2 = \id_B$ and $x^\dagger = x$ and $f^\dagger = f$ for objects and $1$-morphisms;
    \item a left adjoint functor $(.)^L: B \to B^{1\op, 2\op}$ which commutes with $\dagger$.
\end{enumerate}
\end{example}
An advantage of the reformulation of the definition of $O_n$-dagger categories in terms of a lax limit is that the generalization to other tangential structures is straightforward. 
\begin{definition}\label{Def: Bn dagger}
Let $\mathcal{B}_n \to \Vect_{\operatorname{inj}}^{\leq n}$ be a pointed stratifed $n$-dimensional tangential structure. A \emph{$\mathcal{B}_n$-dagger category} is an object of the lax limit $\lim_{\mathcal{B}_n}^{\operatorname{Lax}} \AdjCat_{\bullet}^{\leq n} $ satisfying appropriate fully-faithfulness and surjectivity conditions generalizing Definition~\ref{Def: O dagger} (explained in more detail below). The category of $\mathcal{B}_n$-dagger categories is the full subcategory of the lax limit on $\mathcal{B}_n$-dagger categories.
\end{definition}
The assumption that $\mathcal{B}_n$ is pointed is not necessary to define the lax limit, but seems necessary when we want to be able to impose surjectivity and faithfulness conditions. For example if $X_{i,i+1}$ is empty there is not even a functor we could impose such a condition on. 

Informally, a $\mathcal{B}_{n}$-dagger category consists of a family of $X_i$-volutive categories, i.e.\ fixed points $\Ca_i\in \AdjCat_i^{X_i}$, $X_{ij}$-volutive functors $\Ca_i\to \iota_i \Ca_k$ which are essentially surjective on $(\leq i)$-morphisms, with many additional coherence isomorphisms, and fully-faithfulness conditions.   
Let us explain the fully-faithfulness conditions in more detail.\footnote{This is already a subtle point for the definition of $O_n$-dagger categories and there is a chance that our definitions have to be revised in the future. Their goal is to imply that the larger than $k$-morphisms in $\mathcal{C}_k$ do not add any new information.} For $(k_1, \dots, k_m)$ a partition of $j$ for $j \leq l$ as above we denote by $F_{i,k_1,\dots, k_m}$ the fiber of the map \begin{align}
    X_{i,i+k_1}\bigtimes_{X_{i+k_1}} X_{i+k_1, i+k_1+k_2} \bigtimes_{X_{i+k_1+k_2}} \dots \bigtimes_{X_{i+\dots + k_{m-1}}} X_{i+\dots + k_{m-1}, i+\dots + k_m} \to X_i \ \ . 
\end{align}  
The $\mathcal{B}^n$ version of Equation~\ref{Eq: Faithful} is now constructed by replacing every appearance of $O_{k_1}\times \dots \times O_{k_m}$ with $F_{i,k_1,\dots, k_m}$.
\begin{example}
A vari-framed dagger $n$-category is the same as an $n$-category with adjoints, since the fully-faithfulness and surjectivity conditions imply that $\Ca_i \to \iota_i \Ca_n$ is an equivalence for all $i\leq n$. 

\end{example}

\begin{example}
We will call an $\Vect_{\operatorname{inj}}^{H_n}$-dagger category for the tangential structure from Example~\ref{Ex: Hn} an $H_n$-dagger category for short. 
The arguments in~\cite[Example 5.10]{higherdagger} can be made precise using the results of~\cite{2DuCH} implying that ${SO_2}$-dagger categories are equivalent to pivotal bicategories. A \emph{pivotal bicategory} is a bicategory $B$ with adjoints equipped with a natural isomorphism $\id_B\Longrightarrow (-)^{LL}$ whose components on all objects are the identity or equivalently coherent choices of morphism $X^\vee$ which are both left and right adjoints for all 1-morphisms $X$ as mentioned in the introduction (see also~\cite{TimNils}).

The notion of an $SO_3$-dagger categories should be closely related to the concept of Gray categories with duals~\cite{BMS}.
\end{example}
Another structure which is modeled on oriented topological defects in $\R^n$ is the notion of disk-like $n$-categories introduced by Morrison and Walker in their definition of blob-homology~\cite{morrison2011higher}. They prove that disk-like 2-categories are also equivalent to pivotal bicategories. These facts suggest the following:  
\begin{conjecture}
There is an equivalence between $SO_n$-dagger categories and disk-like $n$-categories.
\end{conjecture}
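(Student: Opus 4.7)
The aim is to construct an equivalence of $(\infty,1)$-categories between $SO_n$-dagger categories (Definition~\ref{Def: Bn dagger} applied to $\Vect_{\operatorname{inj}}^{SO_n}$) and the disk-like $n$-categories of Morrison--Walker. My strategy is to exhibit both sides as presentations of the same underlying data, namely an $(\infty,n)$-category with adjoints equipped with a compatible family of $SO_k$-volutions on its $k$-morphisms for each $k\leq n$, and then compare them via explicit functors in each direction. This argument takes Conjecture~\ref{Conj} as an input (needed for the $SO_n$-dagger side to even be defined for $n>2$) and uses the $n=2$ case, known from~\cite{2DuCH} and~\cite{morrison2011higher}, as the base of an induction on $n$.

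Forward direction. Given an $SO_n$-dagger category $\Ca=(\Ca_0,\ldots,\Ca_n)$, I would produce a disk-like $n$-category $D_\Ca$ by using the graphical calculus for $n$-categories with adjoints announced in~\cite{MSS}. Concretely, $D_\Ca$ assigns to an oriented stratified $k$-ball $B$ the set of labelings of its strata by morphisms in the appropriate $\Ca_i$, modulo the relations enforced by the coherence isomorphisms between the structure functors $\Ca_i\to\iota_i\Ca_k$. Functoriality under orientation-preserving embeddings is implemented by the $SO_k$-volutions on each $\Ca_k$, and the Morrison--Walker gluing axiom is implemented by composition of morphisms together with the snake identities coming from the adjoints. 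The essential surjectivity and fully-faithfulness conditions of Definition~\ref{Def: Bn dagger} translate into the locality axioms of disk-like $n$-categories, ensuring that values on high-dimensional balls are determined by values on small balls.

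Backward direction. Given a disk-like $n$-category $D$, set $\Ca_k(D)$ to be the $(\infty,k)$-category whose $j$-morphisms for $j\leq k$ are the values of $D$ on an oriented $j$-ball sitting as the lowest stratum inside an oriented $k$-ball. Orientation-preserving rotations of the ambient $k$-ball induce the $SO_k$-volution on $\Ca_k(D)$, and embedding a $k$-ball as an equatorial disk in a $(k+1)$-ball yields the structure functors $\Ca_k(D)\to\iota_k\Ca_{k+1}(D)$. The higher coherences between different compositions of these functors are read off from the functoriality of $D$ under the corresponding isotopies of embeddings, and assemble into an object of $\lim_{\Vect_{\operatorname{inj}}^{SO_n}}^{\operatorname{Lax}}\AdjCat_\bullet^{\leq n}$. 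One then checks that the two assignments $\Ca\mapsto D_\Ca$ and $D\mapsto(\Ca_0(D),\ldots,\Ca_n(D))$ are mutually quasi-inverse by unwinding them dimension-by-dimension, the inductive step reducing to the $n=2$ case via the induced structure on $\iota_2\Ca$.

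The main obstacle will be making the graphical calculus for $n$-categories with adjoints precise enough that the functoriality and gluing axioms of disk-like categories correspond \emph{verbatim} to the coherence, essential surjectivity and fully-faithfulness conditions of Definition~\ref{Def: Bn dagger}; this is precisely the work announced in~\cite{MSS} and is intertwined with Conjecture~\ref{Conj}. A secondary difficulty, already visible in~\cite{2DuCH} for $n=2$, is organizing the tower of $SO_k$-equivariance data in a homotopy coherent way; the lax-limit formulation of Definition~\ref{Def: O dagger} is designed to make this manageable, but matching it to the categorical structure Morrison--Walker use to impose pivotality and its higher-dimensional analogues (rotations, sphere actions, balanced structures) will require careful comparison of the actions of $SO_k$ on loops of embeddings on each side.
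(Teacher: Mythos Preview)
The statement you are attempting to prove is labeled a \emph{Conjecture} in the paper, and the paper offers no proof of it. The only justification the paper gives is the sentence immediately preceding the conjecture: for $n=2$, both $SO_2$-dagger categories and disk-like $2$-categories are known to be equivalent to pivotal bicategories (by \cite{2DuCH} and \cite{morrison2011higher} respectively), and this common identification in dimension two is taken as suggestive evidence for the general statement. There is nothing further to compare your proposal against.

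As for the proposal itself, you are candid about its dependencies, and those dependencies are the genuine gap. Your forward direction rests on a graphical calculus for $n$-categories with adjoints that is only announced in \cite{MSS}, and both directions presuppose Conjecture~\ref{Conj} for the $SO_n$-dagger side even to be defined when $n>2$. The inductive reduction to $n=2$ is not obviously valid either: knowing that $\iota_2\Ca$ carries a pivotal structure does not by itself control the higher $SO_k$-volutions or the coherences between the functors $\Ca_i\to\iota_i\Ca_k$ for $i,k>2$, so the ``inductive step reducing to the $n=2$ case'' is a hope rather than an argument. In short, what you have written is a reasonable outline of how one might \emph{expect} such an equivalence to be built, but it is contingent on exactly the unproven infrastructure that keeps the statement a conjecture in the paper.
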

A problem with Definition~\ref{Def: O dagger} and~\ref{Def: Bn dagger} is that for $n>2$ they rely on the unproven Conjecture~\ref{Conj}. If all categories involved are in addition rigid symmetric monoidal we can translate them into well-defined quantities using the following conjecture from~\cite{higherdagger}. 
\begin{conjecture}\label{Conj2}
The $O_n$-action on $\RigidCat_{n}$ induced from the action in Conjecture~\ref{Conj} canonically trivializes. This induces an $O_{n-k}$-action on $\iota_k \Ca$ for every rigid $n$-category which agrees with the $O_{n-k}$-action induced by the cobordism hypothesis. 
\end{conjecture}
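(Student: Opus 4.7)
The plan is to reduce both assertions of the conjecture to a single statement about how the adjoint structure on a rigid $n$-category absorbs rotations of the ambient framing, and then to bridge this internal statement with the cobordism-theoretic $O_n$-action via the graphical calculus. First I would unpack the $O_n$-action on $\AdjCat_n$ predicted by Conjecture~\ref{Conj}: it arises from the endomorphism monoid $O_n = \operatorname{End}_{\Vect_{\operatorname{inj}}}(\R^n)$, and on $n$-categories with adjoints it is expected to act by rotating the ambient framing in the vari-framed higher string-diagram calculus. Restricted to the subcategory $\RigidCat_n \subset \AdjCat_n$, every $k$-cell has both left and right adjoints, so rotating a framing by any $g \in O_n$ can be undone, at the level of diagrams, by inserting cups and caps assembled from the adjoint data.

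The next step is to promote this heuristic into a \emph{canonical} trivialization of the $O_n$-action $\rho$ on $\RigidCat_n$. Concretely, I would proceed by induction on cells of $BO_n$: on the generators of $\pi_0 O_n = \Z/2$ and $\pi_1 O_n = \Z/2$, the comparison with the identity functor is witnessed by the (higher) Serre automorphism, which on a rigid $n$-category is canonically isomorphic to the identity via a zig-zag of adjunctions. Assembling these comparisons into a homotopy coherent trivialization amounts to checking that the remaining cells in $BO_n$ can be filled using only the rigidity data; morally, this is the abstract form of the statement that every framing of a disk can be canonically rectified inside a rigid higher category.

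For the second half of the conjecture — agreement with the cobordism-hypothesis action on $\iota_k \Ca$ — I would compare the two $O_n$-actions through framed bordisms. The cobordism hypothesis endows $\RigidCat_n$ with an $O_n$-action by rotating framings on $\mathrm{Bord}^{fr}_n$. The key step is to show that the graphical-calculus $O_n$-action from Conjecture~\ref{Conj} intertwines with the bordism-theoretic action under evaluation of framed higher string diagrams in $\RigidCat_n$, which should follow from the expected functoriality of the graphical calculus under framed embeddings of Euclidean disks. Once the two actions are identified, restricting the trivialization to the subgroup $O_{n-k} \subset O_n$, which acts trivially on $\iota_k \Ca$ by the $O_k \times O_{n-k}$-equivariance of $\iota_k$ in Conjecture~\ref{Conj}, produces autoequivalences of $\iota_k \Ca$ that by construction agree with the Serre-type $O_{n-k}$-action coming from the cobordism hypothesis.

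The hard part is making the trivialization truly canonical as a homotopy coherent $O_n$-equivariant structure, rather than only on homotopy categories. For $n=1$ and $n=2$ this reduces to classical statements about Serre functors and pivotal structures and should be accessible using the bicategorical machinery being developed in~\cite{2DuCH}. Beyond $n=2$ one has to simultaneously invoke Conjecture~\ref{Conj} and the higher coherence of the fully extended cobordism hypothesis, which is the principal obstacle: a clean strategy would be to first establish $SO_n$-equivariance (the ``oriented'' half) as a theorem about the existence of coherent duality data in a rigid higher category, and only afterwards extend to full $O_n$-equivariance, which is controlled by the additional $\Z/2$ of reflection data naturally encoded by the dagger structure itself.
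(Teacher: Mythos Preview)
The statement is a \emph{conjecture} in the paper, not a theorem: the paper does not prove it, and only offers a heuristic justification. That heuristic is quite different from yours. The paper's argument is a stabilization argument: a rigid symmetric monoidal $n$-category admits infinitely many deloopings, all of which still have adjoints, so its graphical calculus can be run in $\R^\infty$ rather than $\R^n$; the $O_n$-action by rotation of diagrams then trivializes simply because the space of $n$-frames in $\R^\infty$ is contractible. The paper also notes that for $n\leq 2$ and $(2,2)$-categories the conjecture is a theorem, citing~\cite{2DuCH}.

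Your approach tries instead to trivialize cell-by-cell on $BO_n$ by undoing framing rotations with cups and caps built from adjoints. The gap is that this heuristic does not isolate what is special about \emph{rigid symmetric monoidal} categories as opposed to mere $n$-categories with adjoints: the cup/cap undoing of rotations is already available in $\AdjCat_n$, where the $O_n$-action is genuinely nontrivial. In particular, your claim that the Serre automorphism ``on a rigid $n$-category is canonically isomorphic to the identity via a zig-zag of adjunctions'' is not justified without invoking the extra symmetric monoidal (hence $E_\infty$, hence infinitely deloopable) structure; for a plain $n$-category with adjoints the Serre automorphism is typically nontrivial. The paper's delooping heuristic makes explicit exactly this point: it is the passage from $\R^n$ to $\R^\infty$, afforded by the symmetric monoidal structure, that kills the action. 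Your inductive strategy could perhaps be made to work, but only after importing the deloopability, at which point it collapses into the paper's argument.

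For the second half, your idea of matching the two actions by comparing how both act on framed bordisms is in the same spirit as what the paper does one paragraph later (in the Proposition following the conjecture), where the $O_{n-k}$-action on $\iota_k\Ca$ is produced directly from the universal property $\underline{\Hom}(\Bord^*_{n-k};-)$.
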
 
The reasoning behind this conjecture is that a rigid symmetric monoidal category admits infinity many deloopings all of which have adjoints. 
These deloopings are expected to be compatible with the graphical calculus in terms of vari-framed string-diagrams. Hence, a rigid symmetric monoidal $n$-category should have a graphical calculus in terms of $n$-dimensional string diagrams in $\R^\infty$ on which $O_n$ acts by rotation. However, this space is contractible and hence the $O_n$-action trivializes. Again, for $n=2$ and $(2,2)$-categories Conjecture~\ref{Conj2} is a theorem~\cite{2DuCH}.

Hence we can replace for rigid category the ill-defined functor featuring in the definition. 
\begin{proposition}
There is a functor $\RigidCat_{\bullet}\colon \Vect_{\operatorname{inj}}^{\op}\to \Cat$ sending $\R^n$ to $\RigidCat_n$ equipped with the trivial $O_n$-action and the cobordism hypothesis action on morphisms. 
\end{proposition}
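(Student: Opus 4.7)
The plan is to construct $\RigidCat_{\bullet}$ directly from the cobordism hypothesis, bypassing the conjectural $\AdjCat_{\bullet}$. First I would exploit the concrete presentation of $\Vect_{\operatorname{inj}}$: its objects are the $\R^n$, its endomorphism space at $\R^n$ is $O_n$, and $\Hom(\R^i,\R^n) \simeq O_n/O_{n-i}$ for $i < n$, with composition given by concatenation of orthonormal frames. Thus specifying a functor $\Vect_{\operatorname{inj}}^{\op}\to\Cat$ amounts to producing the $(\infty,1)$-categories $\RigidCat_n$ together with an $O_n$-action on each, equivariant functors $\iota_i\colon\RigidCat_n\to\RigidCat_i$ for inclusions as the first $i$ coordinates, and higher coherence data recording how these compose as the $O_n$-orbit varies.

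On objects I would declare the $O_n$-action on $\RigidCat_n$ to be trivial, and on morphisms I would take $\iota_i$ to be the usual underlying-rigid-$i$-category functor. These compose strictly, so the skeletal subcategory of $\Vect_{\operatorname{inj}}$ consisting of the standard inclusions goes through without difficulty. The substantive step is justifying that the $O_n$-action coming from the cobordism hypothesis on $\RigidCat_n$ really does trivialize, so that the chosen trivial action is the correct structure. Here I would argue as in the motivation of Conjecture~\ref{Conj2}: every rigid symmetric monoidal $n$-category embeds into a rigid $(n+k)$-category by trivial stabilization for each $k\geq 0$, yielding a compatible family of $O_{n+k}$-actions on $\RigidCat_n$. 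Passing to the colimit produces an action of $O_\infty$, whose classifying space is contractible, so the action is coherently trivializable. For the case $n\leq 2$ this is rigorous by the bicategorical results of~\cite{2DuCH}.

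Once the trivialization is fixed, the functor $\iota_i$ is automatically $O_i\times O_{n-i}$-equivariant, since both groups act trivially on source and target. What remains is to assemble the $\iota_i$'s and their equivariance data into an honest functor out of $\Vect_{\operatorname{inj}}^{\op}$ rather than merely out of its $1$-categorical truncation. The natural way to produce these higher coherences is again the cobordism hypothesis: each object of $\RigidCat_n$ corresponds to a single framed fully extended $n$-TQFT, and restricting this TQFT to bordism subcategories of dimension $i\leq n$ yields, simultaneously and in a coherent way, all the $\iota_i$'s together with the required compatibilities under framing changes.

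The main obstacle will be implementing the stabilization argument of the second paragraph with full higher-coherence control. In low dimensions the identification with pivotal/disk-like structures provides an independent check, but in general one must verify that the ``obvious'' contraction of $O_\infty$ interacts correctly with the delooping tower of $\RigidCat_n$, and that the resulting trivialization assembles with $\iota_i$ into a functor of $(\infty,1)$-categories rather than a mere point-set-level assignment. These higher coherences, rather than the underlying geometric picture, constitute the technical heart of the argument.
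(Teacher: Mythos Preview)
Your proposal contains a concrete error and, more importantly, conflates the proposition with Conjecture~\ref{Conj2}. The claim that $O_\infty$ has contractible classifying space is simply false: $BO_\infty = BO$ has nontrivial homotopy in infinitely many degrees by Bott periodicity, so your stabilization argument does not force the $O_n$-action to trivialize. Even if this step were repaired, what you are attempting in your second paragraph is precisely a proof of Conjecture~\ref{Conj2}, which the paper explicitly leaves open. The proposition does not assert that the conjectural $O_n$-action of Conjecture~\ref{Conj} trivializes on rigid categories; it only asserts the existence of \emph{some} functor $\Vect_{\operatorname{inj}}^{\op}\to\Cat$ with the stated values. Your strategy of first invoking a pre-existing action and then trivializing it is therefore aiming at the wrong target.

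The paper's proof avoids all of this by constructing the functor directly from corepresenting objects. Working inside the closed symmetric monoidal $(\infty,\infty)$-category $\mathsf{Cat}_{(\infty,\infty)}^{\otimes}$ (realized via categorical spectra), the cobordism hypothesis identifies $\iota_k\colon\RigidCat_n\to\RigidCat_k$ with the functor $\underline{\Hom}(\Bord_{n-k}^*;-)$. The canonical $O_{n-k}$-action on $\Bord_{n-k}^*$ then supplies the equivariance data on $\iota_k$ for free, and the higher coherences for composition follow from the $O_{k-i}\times O_{n-k}$-equivariant equivalence $\Bord_{k-i}^*\otimes\Bord_{n-k}^*\simeq\Bord_{n-i}^*$, itself a formal consequence of the universal property. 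No trivialization argument is needed: the trivial $O_n$-action on $\RigidCat_n$ is simply declared, and all nontrivial equivariance lives in the corepresenting bordism categories. Your final paragraph gestures toward using the cobordism hypothesis for coherence, but the specific mechanism---internal hom against $\Bord_m^*$ and the multiplicativity $\Bord_a^*\otimes\Bord_b^*\simeq\Bord_{a+b}^*$---is the missing idea.
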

\begin{proof}
Let $\mathsf{Cat}_{(\infty,\infty)}^{\otimes}$ be the $\infty$-subcategory of categorical spectra~\cite{Naruki} on symmetric monoidal $(\infty,\infty)$-categories equipped with its natural tensor product $\otimes$ and the corresponding internal hom $$\underline{\operatorname{Hom}}\colon \mathsf{Cat}_{(\infty,\infty)}^{\otimes} \times \mathsf{Cat}_{(\infty,\infty)}^{\otimes} \to \mathsf{Cat}_{(\infty,\infty)}^{\otimes} \ \ . $$ We denote by $\Bord_n^*$ the framed fully extended $n$-dimensional bordism category, which is equipped with a canonical $O_n$-action. A consequence of the cobordism hypothesis is that it satisfies the following universal property
\begin{align}\label{Eq: Universal property}
\underline{\operatorname{Hom}}(\Bord_n^*; \Ca ) = \operatorname{Adj}_n(\Ca)
\end{align}
where $\operatorname{Adj}_n(\Ca)$ is the subcategory of $\Ca$ containing only $n$-dualizable objects and morphisms. Note that if $\Ca$ is an $n$-category then morphisms are $n$-dualizable if and only if they are invertible. The universal property implies that the functor 
\begin{align}
    \iota_k \colon \RigidCat_n \to \RigidCat_k 
\end{align}
is represented by $\underline{\operatorname{Hom}}(\Bord^*_{n-k};-)$. Hence the value of $\iota_k$ on a category is equipped with an $O_{n-k}$-action which we can combine with the trivial $O_k$-action to make $\iota_{k}$ into an $O_k\times O_{n-k}$-equivariant functor. The compatibility with composition follows from $$ \underline{\Hom} (\Bord_{k-i}^*; \underline{\Hom}(\Bord_{n-k}^*;-)) \cong \underline{\Hom}(\Bord_{k-i}^*\otimes \Bord_{n-k}^*;-) $$ and the $O_{k-i}\times O_{n-k}$-equivariant identification $\Bord_{k-i}^*\otimes \Bord_{n-k}^*\cong \Bord_{n-i}^* $ which follows from the universal property~\eqref{Eq: Universal property}.  
\end{proof}
Using this we can rigorously define. 
\begin{definition}
Let $\mathcal{B}_n \to \Vect_{\operatorname{inj}}^{\leq n}$ be a pointed stratifed $n$-dimensional tangential structure. A \emph{rigid $\mathcal{B}_n$-dagger category} is an object of the lax limit $\lim_{\mathcal{B}_n}^{\operatorname{Lax}} \RigidCat_{\bullet}^{\leq n} $ satisfying analogues of the fully-faithfulness and surjectivity conditions from Definition~\ref{Def: O dagger}. The category of rigid $\mathcal{B}_n$-dagger categories is the full subcategory of the lax limit on $\mathcal{B}_n$-dagger categories.
\end{definition}
Note that assuming all the conjectures above a rigid $\mathcal{B}_n$-dagger category has an underlying $\mathcal{B}_n$-dagger category.  
Concretely, a rigid $\mathcal{B}_n$-dagger category consists of 
\begin{itemize}
    \item rigid $i$-categories $\mathcal{C}_i$ equipped with $X_i$-actions for all $0\leq i \leq n$, 
    \item symmetric monoidal $X_{ij}$ equivariant functors $\mathcal{C}_i\to \iota_i \mathcal{C}_{i+k}$ where $X_{ij}$-acts on $\mathcal{C}_i$ through the map to $X_i$ and on $\iota_{i}\Ca_{j}$ by combining the action from the map to $X_{j}$ with the action from the map to $BO_k$. This is possible because the cobordism hypothesis action commutes canonically with any symmetric monoidal functor, 
    \item and coherence isomorphisms such that the essential surjectivity and fully-faithfulness conditions hold. 
\end{itemize}
\begin{example}
An $O_1$-rigid dagger category is the same as a rigid symmetric monoidal dagger category, i.e.\  a symmetric monoidal category equipped with a symmetric monoidal dagger functor $\dagger\colon \Ca\to \Ca^{\op}$ such that all coherence isomorphisms are unitary. The functor sending an object to its dual is not required to be a dagger-functor.
This requirement gets added by asking for a rigid $O_2$-dagger structure on the delooping $B\Ca$ as explained in more detail in~\cite[Example 5.11]{higherdagger}.   
\end{example}

\section{Categories of topological defects are higher dagger categories} 
Whatever a $n$-dimensional quantum field theory is, it should allow us to compute numbers associated to geometric objects, such as correlation functions of operators inserted on a spacetime manifold $M$ equipped with appropriate geometric structures, such as a metric, a principal bundle with connection, or  a Spin-structure. In the following we will denote the geometric structure by $\mathcal{F}$. 
To include defects, spacetime manifolds should be equipped with stratifications whose strata carry labels specifying the type of defect. A defect is called topological if all quantities we can compute are invariant under deformations of its position.  

For concreteness we fix a mathematical framework making this more precise known as functorial field theory, which defines QFTs as (smooth) symmetric monoidal functors $\mathcal{Z}\colon \Bord_n^\mathcal{F} \longrightarrow \mathcal{T}$~\cite{FMT,grady2021geometric,stolz2011supersymmetric, luriecobhyp, atiyah1988topological}. Here $\Bord_n^\mathcal{F}$ is a (smooth) symmetric monoidal higher category of bordisms encoding the cutting and gluing of spacetime manifolds along manifolds of arbitrary codimension and $\mathcal{T}$ is an appropriate target encoding the algebraic structures (complex numbers, state spaces, $\dots$) assigned to manifolds of different dimensions. We do not expect our exposition to depend on this choice of model. It is expected that one can include stratification into the definition of $\Bord_n^\mathcal{F}$ leading to a category $\Bord_n^{D,\mathcal{F}}$ whose morphisms roughly look like   
\begin{center}
\begin{overpic}[scale=0.5]{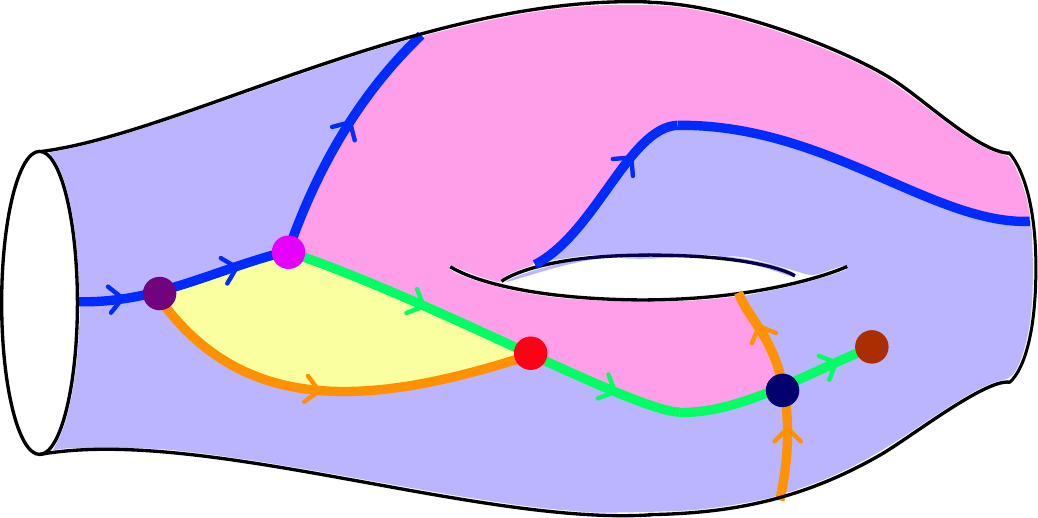}
\end{overpic}  .
\end{center}
To the best of our knowledge these higher categories have not been constructed rigorously even in the setting of topological field theories (for symmetric monoidal 1-categories with oriented defects definitions can be found in~\cite{carqueville2019orbifolds}). Quantum field theories with topological defects can be modeled as symmetric monoidal functors $\Bord_n^{D,\mathcal{F}} \to \mathcal{T}$ where we allow to vary the stratification in the source and require that they only depend on a stratified tangential structure. 

\begin{remark}
The notion of stratified tangential structure is too general for the physical intuition that topological defects form a higher category. For example consider the tangential structure corresponding to $\iota_0\Vect_{\operatorname{inj}}^{\leq n}\to \Vect_{\operatorname{inj}}^{\leq n}$. A stratified manifold with this tangential structure is just a disjoint union of manifolds of dimensions less or equal to $n$. There is no reason for a defect field theory which can evaluate these manifolds to form a higher category. To exclude these examples we will from now on restrict to pointed tangential structures, i.e. those equipped with the additional choice of a map 
\begin{equation}
\begin{tikzcd}
    \mathbb{N}^{\leq n} \ar[rr]\ar[rd] & & \mathcal{B}_n \ar[ld] \\ 
     & \Vect_{\operatorname{inj}}^{\leq n}
\end{tikzcd} \ \ .
\end{equation}
In particular, this allows us to associate to a vari-framed manifold a canonical $\mathcal{B}_n$-structure on the same manifold. All the examples from Section~\ref{Sec: Tangential structures} have a natural choice for such a map. 
\end{remark}

The way topological defects form a higher category $\mathcal{D}^{\mathcal{B}^n}$ is by setting the objects to be quantum field theories. The 1-morphisms are topological defects between these theories. In the functorial framework they are functors $\Bord_n^{a\rightarrow b, \mathcal{F}}\to \mathcal{T}$ out of the stratified bordism category with two top strata and one codimension one stratum which locally looks like the plane of solutions to $e_1=0$ with the $\mathcal{B}_n$-tangential structure induced from the framing on $\R^n$. The source and target of the 1-morphism correspond to the pullback along the inclusion of $\Bord_n^a\hookrightarrow \Bord_n^{a\rightarrow b, \mathcal{F}} $ and $\Bord_n^b\hookrightarrow \Bord_n^{a\rightarrow b, \mathcal{F}}$. This is the space of local defect labels of a hyperplane in $\R^n$ equipped with its canonical $\mathcal{B}_n$-structure induced by the pointing. 

Similarly, 2-morphisms correspond to local labels for the stratification which adds the additional codimension 2-plane characterized by the equation $e_1=e_2=0$, with its $\mathcal{B}_n$-structure induced from the standart framing of $\R^n$. Again we can translate this to functors out of a bordism category locally modeled on this picture. 
This definition extends to higher morphisms by including planes of larger and larger codimension. 
The composition of morphisms corresponds to the fusion of defects, i.e. bring them close together and considering the result as one defect. The best approach, in our opinion, to make the $n$-category of topological defects rigorous is to construct an $n$-fold complete Segal space whose components consist of the spaces of labelings of planes in $\R^n$ with topological defects. For example in 2-dimensions the space $\mathcal{D}^{\mathcal{B}_2}_{2,3}$ would be labelings of 
\begin{center}
\includegraphics[scale=0.7]{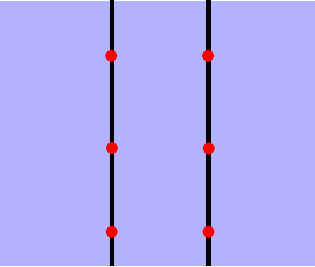}    
\end{center}
with topological defects. 
On the level of bordism categories this correspond to an $n$-fold cosimplical object in rigid symmetric monoidal $n$ categories formed from various bordism categories with defects\footnote{There is ongoing work towards constructing such an object~\cite{BN}.}.  

In the factorization algebra approach to topological quantum field theories~\cite{costello2023factorization}, this procedure has been carried out rigorously in Scheimbauer's construction of the higher Morita category~\cite{Claudiathesis}. This construction assigns to a hyperplane arrangement in $\mathbb{R}^n$ the space of constructible factorization algebras on it, which corresponds to the space of topological defects.

\begin{remark}
The reader might wonder how to reconcile the preceding discussion with the pictures in the introduction involving downwards pointing lines, which do not agree with the orientation induced from the standard framing on $\R^2$. These pictures should be interpreted as corresponding to an upwards pointing line labeled with a defect $X^\vee $ which when evaluated reverses the orientation and inserts $X$. 
\end{remark}

It is believed that $\mathcal{D}^{\mathcal{B}_n}$ has all adjoints which can be constructed by `bending' 
around topological defects. In~\cite{MSS} we will show that sheaves on vari 
framed string diagrams in $\R^n$ lead to $n$ categories with adjoints. The sketched 
construction of $\mathcal{D}^{\mathcal{B}_n}$ lifts to such a sheaf by assigning to a string diagram the 
space of consistent labelings with defects, giving a formal justification for this belief. 

We have already explained in the introduction that $\mathcal{D}^{\mathcal{B}_n}$ is equipped with additional structures depending on $\mathcal{B}_n$. The reason for this is that the category structure on $\mathcal{D}^{\mathcal{B}_n}$ corresponds to vari-framed diagrams, instead of `$\mathcal{B}_n$-diagrams'. There is an action of $O_n$ on these diagrams corresponding to rotating the diagram. Under Conjecture~\ref{Conj} this is exactly the $O_n$-action on categories with adjoints. 
In the case the class of theories only depends on a $\mathcal{B}_n$-structure instead of a vari-framing, part of this action trivializes; exactly the part corresponding to $X_n\to BO_n$. Hence $\mathcal{D}^{\mathcal{B}_n}$ should be a $X_n$-volutive category. But more is true: If we look at the space of objects of $\mathcal{D}$ corresponding to configurations with no defects inserted the action described above is trivial and the $X_n$-volution acts trivially on this space. On the space of 1-morphism the action in the orthogonal $(n-1)$-plane is trivialized and so on for spaces of higher morphisms. All these trivializations are compatible with each other. The resulting structure exactly matches the definition of an $\mathcal{B}_n$-dagger category. Hence we propose: 
\begin{proposal}
Let $\mathcal{B}_n$ be a pointed stratified $n$-dimensional tangential structure. Categories of topological defects with $\mathcal{B}_n$ structure are $\mathcal{B}_n$-dagger categories.
\end{proposal}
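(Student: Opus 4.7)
The plan is to use the stratified cobordism hypothesis to identify, for each $0 \leq i \leq n$, the data of $\mathcal{D}^{\mathcal{B}_n}$ restricted to an $i$-stratum with $X_i$-equivariant data inside a universal rigid symmetric monoidal $n$-category of defects, and then to match the resulting tower of $X_{i,k}$-equivariant functors $\mathcal{C}_i \to \iota_i \mathcal{C}_k$ with an object of $\lim^{\operatorname{Lax}}_{\mathcal{B}_n} \RigidCat_\bullet^{\leq n}$ satisfying the essential surjectivity and fully-faithfulness conditions.

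First I would take $\mathcal{C}_n$ to be the fully extended target of the defect TQFT, so that under the cobordism hypothesis its space of objects is the core of the space of fully dualizable objects and carries the cobordism-hypothesis $O_n$-action. By Conjecture~\ref{Conj2} this action trivializes on rigid categories, so the only remaining equivariance comes from the tangential structure. Applying the stratified cobordism hypothesis stratum-by-stratum identifies the defect data on an $i$-stratum with $X_i$-fixed-point data in $\iota_i \mathcal{C}_n$: explicitly, the direct-sum hypothesis on the stable tangential structure $H \to O$ makes the map of spans from Example~\ref{Ex: Hn} precise, and this span structure transports to $X_{i,k}$-equivariant symmetric monoidal functors $\mathcal{C}_i \to \iota_i \mathcal{C}_k$ via restriction of a defect TQFT to a higher-codimension substratum inside the star of an $i$-stratum.

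Second I would verify the essential surjectivity and fully-faithfulness conditions of Definition~\ref{Def: Bn dagger}. Essential surjectivity on $(\leq i)$-morphisms follows from the fact that any $(\leq i)$-morphism in $\iota_i \mathcal{C}_n$ can be realized as a defect configuration on a small $i$-disk, once the pointing $\mathbb{N}^{\leq n}\to \mathcal{B}_n$ is used to upgrade the canonical vari-framing of this disk to a $\mathcal{B}_n$-structure. The fully-faithfulness on $(>i)$-morphisms expresses that a $(>i)$-morphism supported on an $i$-stratum is, by fusion with the ambient stratum, determined by its images in the various $\iota_i \mathcal{C}_{i+j}$ for enveloping strata, which follows from the universal property of the stratified cobordism hypothesis together with the coherent compatibility of restrictions along composable inclusions of strata encoded by Remark~\ref{Rem: X}.

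The hardest step will be extracting the full $X_{i,k}$-equivariant coherence data from the stratified cobordism hypothesis. Even a precise formulation of that hypothesis is currently only available in limited settings, and the equivariance requires comparing the ambient-plus-normal tangent data of~\cite{AFR19} with the way the ordinary cobordism hypothesis parametrizes fully extended theories by tangential structures on their top stratum. A convenient route is to work with Scheimbauer's factorization-algebra model~\cite{Claudiathesis}, where the $X_i$-action on the space of constructible factorization algebras adapted to a stratum is manifest and where composition of strata corresponds to factorization of algebras, making the span data of Remark~\ref{Rem: X} transparent.
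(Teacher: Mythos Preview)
The statement in question is a \emph{Proposal}, not a theorem, and the paper does not claim to prove it. The paper's own justification is the informal geometric argument given in the paragraph immediately preceding the Proposal: the $n$-category $\mathcal{D}^{\mathcal{B}_n}$ is built from labelings of vari-framed hyperplane arrangements in $\R^n$, so $O_n$ acts by rotating diagrams; when labelings only depend on a $\mathcal{B}_n$-structure, the part of this action factoring through $X_n\to BO_n$ trivializes, and on $i$-morphisms the orthogonal $O_{n-i}$-piece trivializes as well, yielding precisely the tower of compatible volutions and trivializations that defines a $\mathcal{B}_n$-dagger category. That is the entire content of the paper's ``proof''; the subsequent Theorem is a separate, strictly weaker statement.

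Your approach is not a proof of the Proposal but rather a sketch of the paper's proof of the later \emph{Theorem}, transplanted onto the Proposal. You immediately restrict to rigid symmetric monoidal targets and invoke the stratified cobordism hypothesis, which only addresses fully extended topological field theories; the Proposal, by contrast, is stated for categories of topological defects in arbitrary quantum field theories with geometric structure $\mathcal{F}$. Moreover, even within the TQFT setting the paper explicitly says it was \emph{unable} to push the cobordism-hypothesis argument beyond stable tangential structures with direct sum: ``We expect that the stratified cobordism hypothesis implies our proposal in general, but we were unable to verify it explicitly. One of the problems we run into is that the exact statement of the stratified cobordism hypothesis for arbitrary tangential structures is slightly elusive.'' Your plan to extract the full $X_{i,k}$-equivariant coherence data from the stratified cobordism hypothesis for general $\mathcal{B}_n$ is therefore not a gap you can fill with available tools; it is precisely the obstruction the paper flags. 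Your invocation of the direct-sum hypothesis on $H\to O$ in the middle of the argument tacitly acknowledges this restriction, and once that hypothesis is in force you are proving the Theorem, not the Proposal. Finally, your essential-surjectivity and fully-faithfulness arguments are heuristics at the same level of informality as the paper's, but they do not match the paper's actual mechanism in the special case: there the paper builds $\operatorname{Herm}(\Ca)$ explicitly as the tower $(\iota_i\Ca)^{H_{n-i}}$ and then identifies each stage with a defect category via the stratified cobordism hypothesis, rather than arguing surjectivity via disks and faithfulness via fusion.
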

Some additional evidence for this proposal beyond the informal argument above is that it reproduces and unifies the few known descriptions of additional structures on categories of topological defects, in particular the connection between oriented defects in 2-dimensional quantum field theories and pivotal bicategories.     

For certain topological field theories the proposal follows from the stratified cobordism hypothesis as we explain next. 
Let $\mathcal{B}_n$ be a pointed stratified tangential structure. There is a functor from rigid $\mathcal{B}_n$-dagger categories to rigid categories equipped with an $X_n$-action
\begin{align}
(-)_n \colon \mathcal{B}_n\text{-}\dagger\RigidCat_n & \longrightarrow \RigidCat_n^{X_n} \\ 
\left( \Ca_0\rightarrow \dots \rightarrow \Ca_n \right) & \longmapsto \Ca_n 
\end{align}
which we expect to admit a right adjoint $\operatorname{Herm}\colon \RigidCat_n^{X_n} \to \mathcal{B}_n\text{-}\dagger\RigidCat_n $. 

We suggest the following explicit construction of its components at a rigid category with trivial $X_n$-action, corresponding to the canonical $X_n$-volution: 
The idea of the construction is to construct the largest choices for $\Ca_i$ which are compatible with the fully-faithfulness conditions. For this we set $\operatorname{Herm}'_{i}(\Ca)= (\iota_i\Ca)^{F_{i,n}} $ equipped with the trivial $X_i$-action, where $F_{i,j}$ is the fiber of $X_{i,j}\to BO_i$.
These categories do not from a dagger category, since in general the essential surjectivity conditions are violated. However, by restricting to objects and morphisms admitting at least one fixed point this can easily be solved. We denote the resulting $\mathcal{B}^n$-dagger category by $\operatorname{Herm}(\Ca)$ and expect this to be a model for the right adjoint. It is straightforward to verify this for simple tangential structures in low dimensions. 
\begin{example}
 For unoriented defects the fiber $F_{i,n}$ is $BO_{n-i}$ and hence $\operatorname{Herm}'(\Ca)$ is
 \begin{align}
    (\iota_0 \Ca)^{O_n} \to (\iota_1 \Ca)^{O_{n-1}} \to \dots \to (\iota_{n-1} \Ca)^{O_1}\to \Ca \ \ . 
 \end{align}
 This example generalizes to stable tangential structures with a direct sum operation such as orientations or spin structures replacing $O_i$ with $H_i$. For $n=2$ and $H=\operatorname{SO}$ this reproduces the construction in~\cite{TimNils}.    
 
We can use the stratified cobordism hypothesis to identify the spaces in $\operatorname{Herm} (\Ca)$ in this setting. The traditional cobordism hypothesis states that the space $\operatorname{Herm}(\Ca)_0= (\iota_0 \mathcal{C})^{H_n}$ where $H_n$-acts on $\iota_0 \mathcal{C}$ via the map $BH_n\to BO_n$ is the space of fully extended topological field theories with $H_n$-structure. 
The stratified version of the cobordism hypothesis~\cite[Section 4.3]{luriecobhyp} (see also~\cite[Section 2.5]{FMT}) implies that this correspondence continues and we can identify $\operatorname{Herm}_i(\Ca)=(\iota_i\Ca)^{H_{n-i}}$ with the $i$-category encoding codimension $i$-defects with $H_{n-i}$-structure.  
\end{example}
This example implies:
\begin{theorem}
Let $H$ be a stable tangential structure with direct sum. 
Assuming the stratified cobordism hypothesis the category of $H_n$-topological defects in fully extended topological quantum field theories is a rigid $H_n$-dagger category. 
\end{theorem}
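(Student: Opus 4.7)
The plan is to identify, for a fixed rigid symmetric monoidal $n$-category $\Ca$ serving as target, the higher category $\mathcal{D}_{\Ca}^{H_n}$ of $H_n$-topological defects in fully extended TQFTs with values in $\Ca$ with the rigid $H_n$-dagger category $\operatorname{Herm}(\Ca)$ constructed immediately before the theorem. Once such an equivalence is exhibited, the conclusion follows because $\operatorname{Herm}(\Ca)$ is by construction a rigid $H_n$-dagger category.

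First, at each level $0\le i\le n$, apply the stratified cobordism hypothesis to identify the $i$-category of codimension-$i$ defects (equipped with their canonical $H_{n-i}$-tangential structure) with $(\iota_i\Ca)^{H_{n-i}}$, where $H_{n-i}$ acts through the composition $BH_{n-i}\to BO_{n-i}$ with the cobordism-hypothesis $O_{n-i}$-action on $\iota_i\Ca$ supplied by the proposition preceding the theorem. Since $H$ admits a direct sum operation, Example~\ref{Ex: Hn} yields $F_{i,n}=BH_{n-i}$, so this $i$-category is precisely $\operatorname{Herm}'(\Ca)_i$; the subsequent restriction to labels admitting a fixed point matches exactly the restriction to labels actually realized in the stratified bordism category.

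Second, verify compatibility of the structure functors $\operatorname{Herm}(\Ca)_i\to \iota_i\operatorname{Herm}(\Ca)_k$ with the physical operation of regarding a codimension-$i$ defect as a codimension-$k$ defect whose extra $(k-i)$ normal directions are trivial. Under the direct-sum hypothesis on $H$, the maps of spans displayed in Example~\ref{Ex: Hn} present the functor on the $\operatorname{Herm}$ side as restriction of fixed-point data along the addition $H_{n-k}\oplus H_{k-i}\to H_{n-i}$, which on the bordism side is induced by the vari-framed inclusions $\R^i\subset \R^k\subset \R^n$ of linear defect models. Both identifications arise from naturality of the stratified cobordism hypothesis in $\Vect_{\operatorname{inj}}$, so the structure maps and their higher coherences match.

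The main obstacle is verifying the fully-faithfulness and essential surjectivity conditions of Definition~\ref{Def: Bn dagger}. Essential surjectivity on $(\le i)$-morphisms is built into the $\operatorname{Herm}$ construction via the restriction to labels admitting fixed points. Fully-faithfulness on $(>i)$-morphisms is more delicate: one must show that the comparison map $\Ca_i\to P_i(\Ca)$ is fully faithful, i.e.\ that higher-codimensional defect data encode no genuinely new information about $(>i)$-morphisms between codimension-$i$ defects. Under the direct-sum hypothesis this should reduce to the product decomposition $F_{i,k_1,\dots,k_m}\cong \prod_j BH_{k_j}$ combined with iterated applications of the stratified cobordism hypothesis to nested local models of deeper strata; carrying out this reduction while keeping track of all coherences is the technical heart of the proof.
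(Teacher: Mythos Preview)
Your proposal is correct and follows the same approach as the paper: identify the defect category with $\operatorname{Herm}(\Ca)$ via the stratified cobordism hypothesis applied level by level, using that the direct-sum hypothesis gives $F_{i,n}\simeq BH_{n-i}$. The paper's own argument is in fact much terser than yours---it is literally the phrase ``This example implies'' following the example you cite---so you are being considerably more careful in spelling out the structure-map compatibilities and the fully-faithfulness/surjectivity conditions; the paper regards the latter as built into the design of $\operatorname{Herm}$ (the $\operatorname{Herm}'_i$ are chosen as the largest levels compatible with fully-faithfulness, and the restriction to objects admitting fixed points forces essential surjectivity) rather than as something to be verified separately.
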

We expect that the stratified cobordism hypothesis implies our proposal in general, but we were unable to verify it explicitly. One of the problems we run into is that the exact statement of the stratified cobordism hypothesis for arbitrary tangential structures is slightly elusive.    
\section{Algebraic constructions for categories of topological defects}
Many constructions with topological defects are expected to have a purely higher algebraic interpretation. To appropriately take into account different tangential structures of defects, the higher dagger structure will be crucial. We will show in two examples how the notion of higher dagger categories can be used as an organizing principle. 
\subsection{Condensation, orbifold, and unitary idempotent completions} 
As mentioned in the introduction, topological defects are often interpreted as generalized symmetries. Gauging these generalized symmetries is also referred to as condensation~\cite{GJF} or generalized orbifold constructions~\cite{ffrs0909.5013,cr1210.6363, carqueville2019orbifolds, CARQUEVILLE2025618}. 
\begin{figure}[h]
\centering
        \begin{overpic}[width=0.7\textwidth, tics=10]{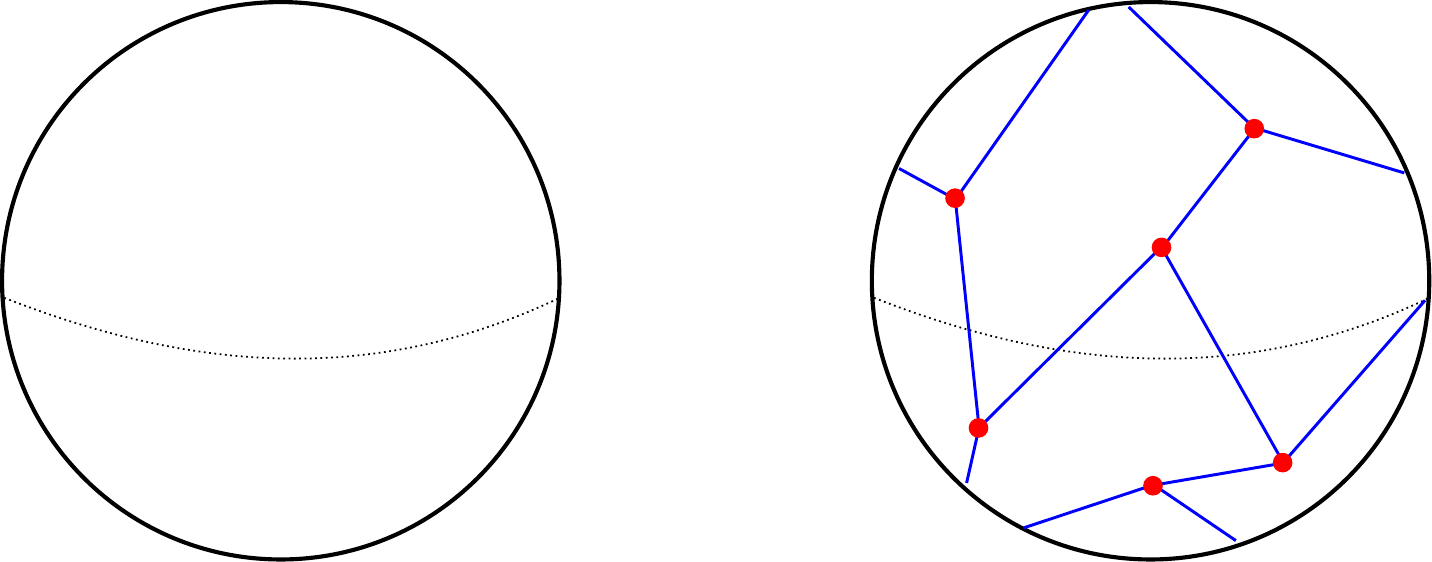}
           \put(-70,60){\huge\(\mathcal{Z}/\hspace{-0.2cm}/\mathcal{A} ( \)}
           \put(140,60){\huge \( ) \)}
           \put(150,60){\huge \(=\)}
           \put(175,60){\huge \( \mathcal{Z} ( \)}
           \put(340,60){\huge \() \)}
           \end{overpic} 
    \caption{The value of the gauge theory on a 2-dimensional manifold $M$ constructed by the evaluation of the original theory in the presence of a network of defects.}
         \label{fig:Orb}
\end{figure}

Roughly, the gauging of a generalized symmetry corresponds to picking a collection of defects $\mathcal{A}$ in a quantum field theory $\mathcal{Z}$ which can be inserted in a dense mesh into spacetimes such that the outcome of evaluating the defect field theory on it is independent of the exact from of the mesh, see Figure~\ref{fig:Orb} for a sketch. The gauged theory $\mathcal{Z}/\hspace{-0.1cm}/ \mathcal{A}$ is the theory whose partition function is the value of the original theory in the 
presents of this mesh of defects. $\mathcal{A}$ is called a \emph{condensable algebra} or \emph{condensation monad} (in the 
setting of framed theories) and a \emph{generalized orbifold datum} (in the setting of oriented theories). It should be apparent 
that the specifics of the constituents of $\mathcal{A}$ as well as the conditions they have to satisfy depend on the tangential 
structure we want the theory $\mathcal{Z}/\hspace{-0.1cm}/ \mathcal{A}$ to depend on as well as the tangential structure of the 
involved defects. 
\begin{example}\label{Ex: Orb}
For $n=2$ and oriented theories, $\mathcal A$ amounts to a $\Delta$-separable symmetric Frobenius algebra in $\End(\mathcal{Z})$ with defining relations 
\begin{equation}
\label{eq:DssFrob}
\tikzzbox{%
	\begin{tikzpicture}[very thick,scale=0.53,color=green!50!black, baseline=0.59cm]
	\draw[-dot-] (3,0) .. controls +(0,1) and +(0,1) .. (2,0);
	\draw[-dot-] (2.5,0.75) .. controls +(0,1) and +(0,1) .. (3.5,0.75);
	\draw (3.5,0.75) -- (3.5,0); 
	\draw (3,1.5) -- (3,2.25); 
	\end{tikzpicture} 
}%
=
\tikzzbox{%
	\begin{tikzpicture}[very thick,scale=0.53,color=green!50!black, baseline=0.59cm]
	\draw[-dot-] (3,0) .. controls +(0,1) and +(0,1) .. (2,0);
	\draw[-dot-] (2.5,0.75) .. controls +(0,1) and +(0,1) .. (1.5,0.75);
	\draw (1.5,0.75) -- (1.5,0); 
	\draw (2,1.5) -- (2,2.25); 
	\end{tikzpicture} 
}%
\, , \quad
\tikzzbox{%
	\begin{tikzpicture}[very thick,scale=0.33,color=green!50!black, baseline]
	\draw (-0.5,-0.5) node[Odot] (unit) {}; 
	\fill (0,0.6) circle (5.0pt) node (meet) {};
	\draw (unit) .. controls +(0,0.5) and +(-0.5,-0.5) .. (0,0.6);
	\draw (0,-1.5) -- (0,1.5); 
	\end{tikzpicture} 
}%
=
\tikzzbox{%
	\begin{tikzpicture}[very thick,scale=0.33,color=green!50!black, baseline]
	\draw (0,-1.5) -- (0,1.5); 
	\end{tikzpicture} 
}%
=
\tikzzbox{%
	\begin{tikzpicture}[very thick,scale=0.33,color=green!50!black, baseline]
	\draw (0.5,-0.5) node[Odot] (unit) {}; 
	\fill (0,0.6) circle (5.0pt) node (meet) {};
	\draw (unit) .. controls +(0,0.5) and +(0.5,-0.5) .. (0,0.6);
	\draw (0,-1.5) -- (0,1.5); 
	\end{tikzpicture} 
}%
\, , \quad
\tikzzbox{%
	\begin{tikzpicture}[very thick,scale=0.53,color=green!50!black, baseline=-0.59cm, rotate=180]
	\draw[-dot-] (3,0) .. controls +(0,1) and +(0,1) .. (2,0);
	\draw[-dot-] (2.5,0.75) .. controls +(0,1) and +(0,1) .. (1.5,0.75);
	\draw (1.5,0.75) -- (1.5,0); 
	\draw (2,1.5) -- (2,2.25); 
	\end{tikzpicture} 
}%
=
\tikzzbox{%
	\begin{tikzpicture}[very thick,scale=0.53,color=green!50!black, baseline=-0.59cm, rotate=180]
	\draw[-dot-] (3,0) .. controls +(0,1) and +(0,1) .. (2,0);
	\draw[-dot-] (2.5,0.75) .. controls +(0,1) and +(0,1) .. (3.5,0.75);
	\draw (3.5,0.75) -- (3.5,0); 
	\draw (3,1.5) -- (3,2.25); 
	\end{tikzpicture} 
}%
\, , \quad
\tikzzbox{%
	\begin{tikzpicture}[very thick,scale=0.33,color=green!50!black, baseline=0, rotate=180]
	\draw (0.5,-0.5) node[Odot] (unit) {}; 
	\fill (0,0.6) circle (5.0pt) node (meet) {};
	\draw (unit) .. controls +(0,0.5) and +(0.5,-0.5) .. (0,0.6);
	\draw (0,-1.5) -- (0,1.5); 
	\end{tikzpicture} 
}%
=
\tikzzbox{%
	\begin{tikzpicture}[very thick,scale=0.33,color=green!50!black, baseline=0, rotate=180]
	\draw (0,-1.5) -- (0,1.5); 
	\end{tikzpicture} 
}%
=
\tikzzbox{%
	\begin{tikzpicture}[very thick,scale=0.33,color=green!50!black, baseline=0cm, rotate=180]
	\draw (-0.5,-0.5) node[Odot] (unit) {}; 
	\fill (0,0.6) circle (5.0pt) node (meet) {};
	\draw (unit) .. controls +(0,0.5) and +(-0.5,-0.5) .. (0,0.6);
	\draw (0,-1.5) -- (0,1.5); 
	\end{tikzpicture} 
}%
\, , \quad 
\tikzzbox{%
	\begin{tikzpicture}[very thick,scale=0.33,color=green!50!black, baseline=0cm]
	\draw[-dot-] (0,0) .. controls +(0,-1) and +(0,-1) .. (-1,0);
	\draw[-dot-] (1,0) .. controls +(0,1) and +(0,1) .. (0,0);
	\draw (-1,0) -- (-1,1.5); 
	\draw (1,0) -- (1,-1.5); 
	\draw (0.5,0.8) -- (0.5,1.5); 
	\draw (-0.5,-0.8) -- (-0.5,-1.5); 
	\end{tikzpicture}
}%
=
\tikzzbox{%
	\begin{tikzpicture}[very thick,scale=0.33,color=green!50!black, baseline=0cm]
	\draw[-dot-] (0,1.5) .. controls +(0,-1) and +(0,-1) .. (1,1.5);
	\draw[-dot-] (0,-1.5) .. controls +(0,1) and +(0,1) .. (1,-1.5);
	\draw (0.5,-0.8) -- (0.5,0.8); 
	\end{tikzpicture}
}%
=
\tikzzbox{%
	\begin{tikzpicture}[very thick,scale=0.33,color=green!50!black, baseline=0cm]
	\draw[-dot-] (0,0) .. controls +(0,1) and +(0,1) .. (-1,0);
	\draw[-dot-] (1,0) .. controls +(0,-1) and +(0,-1) .. (0,0);
	\draw (-1,0) -- (-1,-1.5); 
	\draw (1,0) -- (1,1.5); 
	\draw (0.5,-0.8) -- (0.5,-1.5); 
	\draw (-0.5,0.8) -- (-0.5,1.5); 
	\end{tikzpicture}
}%
\, , \quad 
\begin{tikzpicture}[very thick,scale=0.33,color=green!50!black, baseline=-0.1cm]
\draw[-dot-] (0,0) .. controls +(0,-1) and +(0,-1) .. (1,0);
\draw[-dot-] (0,0) .. controls +(0,1) and +(0,1) .. (1,0);
\draw (0.5,-0.8) -- (0.5,-1.5); 
\draw (0.5,0.8) -- (0.5,1.5); 
\end{tikzpicture}
\, = \, 
\begin{tikzpicture}[very thick,scale=0.33,color=green!50!black, baseline=-0.1cm]
\draw (0.5,-1.5) -- (0.5,1.5); 
\end{tikzpicture}
\, , \quad 
\begin{tikzpicture}[very thick,scale=0.33,color=green!50!black, baseline=0cm]
\draw[-dot-] (0,0) .. controls +(0,1) and +(0,1) .. (-1,0);
\draw[directedgreen, color=green!50!black] (1,0) .. controls +(0,-1) and +(0,-1) .. (0,0);
\draw (-1,0) -- (-1,-1.5); 
\draw (1,0) -- (1,1.5); 
\draw (-0.5,1.2) node[Odot] (end) {}; 
\draw (-0.5,0.8) -- (end); 
\end{tikzpicture}
= 
\begin{tikzpicture}[very thick,scale=0.33,color=green!50!black, baseline=0cm]
\draw[redirectedgreen, color=green!50!black] (0,0) .. controls +(0,-1) and +(0,-1) .. (-1,0);
\draw[-dot-] (1,0) .. controls +(0,1) and +(0,1) .. (0,0);
\draw (-1,0) -- (-1,1.5); 
\draw (1,0) -- (1,-1.5); 
\draw (0.5,1.2) node[Odot] (end) {}; 
\draw (0.5,0.8) -- (end); 
\end{tikzpicture}
\end{equation} 
also known as an orbifold datum. Note that the last equation which is the symmetry condition can only be defined in a pivotal bicategory, because the left and right side use coevaluations of different adjunctions.  

On the other hand a condensation 2-monad~\cite{DR,GJF} which can be defined in any 2-category is a $\Delta$-separable Frobenius algebra, i.e. it satisfies the same relations as above with the exception of the last one. We refer to~\cite{FrauenbergerMasterThesis} for a detailed comparison. 
\end{example}

In~\cite{GJF} it was suggested that for framed theories $\mathcal{A}$ should algebraically be interpreted as a $n$-categorical version of an idempotent and the theory $\mathcal{Z}/\hspace{-0.1cm}/ \mathcal{A}$ provides a splitting of it. In particular, adding all theories and defects which can be constructed from a given collection of theories and defects encoded by an $n$-category $\mathcal{D}$ with adjoints, leads to a new category $\operatorname{Con}(\mathcal{D})$ the \emph{condensation completion} of $\mathcal{D}$. From an algebraic perspective this should be understood as being the idempotent or Karoubi completion of $\mathcal{D}$. 
It can be characterized by the following universal property: The $n$-category of functors from $\mathcal{D}$ to a Karoubi complete $n$-category $\mathcal{T}$ is equivalent to the category of functors from $\operatorname{Con}(\mathcal{D})$ to $\mathcal{T}$, i.e. 
\begin{equation}
\label{Eq: completion}
\begin{tikzcd}
\mathcal{D} \ar[r] \ar[d] & \mathcal{T} \\
\operatorname{Con}(\mathcal{D}) \ar[ru, dotted, "{\exists !}",swap]
\end{tikzcd}
\end{equation}

The following 1-dimensional example motivates how to extend this picture to other tangential structures.     

\begin{example}
Let $\mathcal{C}$ be a 1-category describing topological defects in a collection of oriented (or equivalently framed) 1-dimensional QFTs. A condensation monad is in this case the same as an endomorphism $\mathcal{A}\colon \mathcal{Z}\to \mathcal{Z} $ satisfying $\mathcal{A}^2=\mathcal{A}$, i.e.\ an idempotent. The value of $\mathcal{Z}/\hspace{-0.1cm}/ \mathcal{A} $ on, for example, a circle equipped with appropriate geometric structures is the evaluation of
\begin{center}
\begin{overpic}[scale=0.7]{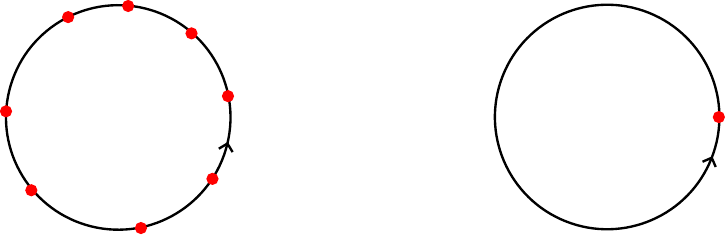} 
\put(28,6.5){{$\mathcal{Z}$}}
\put(70,12){$\mathcal{A}$} 
\put(120,35){{$=$}}
\end{overpic} \ \ .
\end{center}

Let us try to extend this construction to defects in unoriented theories, described by a dagger category $(\mathcal{C}, \dagger)$. We can choose, as above, an endomorphism $\mathcal{A}$ and define the gauged theory through the picture above. However, the resulting theory would depend on an orientation which distinguishes inserting $\mathcal{A}$ and $\mathcal{A}^\dagger$. For the construction to result in an unoriented theory, we need to impose $\mathcal{A}=\mathcal{A}^\dagger$, i.e. $\mathcal{A}$ needs to be a self-adjoint idempotent. 
We can define the splitting of a self-adjoint idempotent $\mathcal{A}$ as a 1-morphism $F\colon \mathcal{Z} \to \mathcal{Z }/\hspace{-0.1cm}/ \mathcal{A}$ such that $F^\dagger \circ F= \mathcal{A}$ and $F \circ F^\dagger = \id$. We say a dagger category is unitarily idempotent complete if every self-adjoint idempotent splits. Every dagger category admits a self adjoint idempotent completion $\operatorname{Con}^{O_1}(\Ca)$ which satisfies the analogues universal property to Diagram~\eqref{Eq: completion} where $\mathcal{T}$ is a unitarily idempotent complete dagger category and all functors involved are dagger functors. In summary gauging in unoriented 1-dimensional theories corresponds to splitting self-adjoint idempotents.    
\end{example}
This leads to the following expectation. 
\begin{expectation}
Let $\mathcal{B}_n$ be a pointed stratified tangential structure. Gauging generalized symmetries corresponding to topological defects with $\mathcal{B}_n$-structure has a categorical interpretation and implementation as splitting a $\mathcal{B}_n$-idempotent in the $\mathcal{B}_n$-dagger category $\mathcal{D}$ of topological defects. Closing $\mathcal{D}$ under gauging leads to a $\mathcal{B}_n$-dagger category $\operatorname{Con}^{\mathcal{B}_n}(\mathcal{D})$ in which every idempotent splits. Furthermore, $\operatorname{Con}^{\mathcal{B}_n}(\mathcal{D})$ satisfies an analogue's universal property to Diagram~\eqref{Eq: completion} where $\mathcal{T}$ is $\mathcal{B}_n$-idempotent complete and all functors are $\mathcal{B}_n$-dagger functors.     
\end{expectation}
We do not know how to define these objects in general, but would like to highlight a few special cases in the literature: 
In Section 4.2.2 of~\cite{CM} we study the $SO_2$ case in detail. This requires the following translation: An $SO_2$-dagger bicategory can be identified with a pivotal bicategory. What we called an $SO_2$-idempotent above is an orbifold datum, see Example~\ref{Ex: Orb} and a splitting of an orbifold datum $\mathcal{A}$ is a 1-morphism $X\colon a\to b$ such that $X^\vee \circ X \cong \mathcal{A}$ as orbifold datum (see~\cite{CM} for an explanation of why the left hand side is an orbifold datum). In Proposition 4.17 of~\cite{CM} we show that the orbifold completion of~\cite{cr1210.6363} satisfies the universal property sketched above for pivotal functors between bicategories. We suggested that this correspondence continues, i.e. that the concept of an orbifold datum is the definition of an $SO_n$-idempotent we are looking for if interpreted in higher categorical language.    

In~\cite{chen2021q} it is shown that Q-system completion satisfies a similar universal property for unitary 2-categories (we expect this result to be about $O_2$-dagger bicategories) and~\cite{chen2024manifestly} contains first steps towards $O_3$-dagger idempotent completions. In~\cite{aasen2019fermion} something closely related to the $\operatorname{Spin}_2$-case is discussed. The recent preprint~\cite{TimNils} suggests a general algebraic approach to the definition of dagger completions and gauging of non-invertible symmetries in two spacetime dimensions.       
\subsection{Euler completions}
The Euler completion introduced in~\cite{carqueville2019orbifolds} is a way of constructing new oriented defects from old ones. Algebraically it has been described as an operation for pivotal bicategories and Gray categories with duals~\cite{CM}. Let $D_k$ be a $k$-dimensional defect and $\phi$ an invertible point defect on $D_k$. We can construct a new defect $(D_k,\phi)$
whose evaluation on a stratified manifold is $D_k$ with $\phi$ to the power of the Euler characteristic of the stratum inserted. We give an interpretation of the construction from the perspective of higher dagger categories. 

Let $\mathcal{B}$ be a pivotal bicategory. The Euler completion $E(\mathcal{B})$ has as objects pairs $b\in \mathcal{B}$ together with a 2-isomorphism $\phi\colon \id_{b}\Longrightarrow \id_{b}$ and the same 1 and 2 morphisms as $\mathcal{B}$ ignoring $\phi$. Hence, as a bicategory $E(\mathcal{B})$ is equivalent to $\mathcal{B}$. However we can use the choice of $\phi$ to change the pivotal structure by changing the adjunction data for a 1-morphism $X\colon (b,\psi) \longrightarrow (b', \psi')$ as follows

\begin{equation} 
\label{eq:EulerAdjunctionMaps}
\tikzzbox{%
	\begin{tikzpicture}[very thick,scale=1.0,color=blue!50!black, baseline=0.5cm]
	\coordinate (X) at (0.5,0);
	\coordinate (Xd) at (-0.5,0);
	\coordinate (d1) at (-1,0);
	\coordinate (d2) at (+1,0);
	\coordinate (u1) at (-1,1.25);
	\coordinate (u2) at (+1,1.25);
	%
	\fill [orange!40!white, opacity=0.7] (d1) -- (d2) -- (u2) -- (u1); 
	\draw[thin] (d1) -- (d2) -- (u2) -- (u1) -- (d1); 
	%
	\draw[directed] (X) .. controls +(0,1) and +(0,1) .. (Xd);
	%
	\fill (X) circle (0pt) node[below] {{\small $X\vphantom{X^\vee}$}};
	\fill (Xd) circle (0pt) node[below] {{\small ${}^\vee\!X$}};
	\fill[red!80!black] (0,0.15) circle (0pt) node {{\scriptsize $b'$}};
	\fill[red!80!black] (0.8,1) circle (0pt) node {{\scriptsize $b$}};
	\fill[black] (0.15,0.4) circle (1.5pt) node[left] {{\scriptsize $\psi'_{b'}\hspace{-0.3em}$}};
	\fill[black] (-0.8,1) circle (1.5pt) node[right] {{\scriptsize $\hspace{-0.1em}\psi_{b}^{-1}$}};
	\end{tikzpicture}
}
\, \ \ \text{  and } \quad 
\tikzzbox{%
	\begin{tikzpicture}[very thick,scale=1.0,color=blue!50!black, baseline=0.5cm]
	\coordinate (X) at (-0.5,1.25);
	\coordinate (Xd) at (0.5,1.25);
	\coordinate (d1) at (-1,0);
	\coordinate (d2) at (+1,0);
	\coordinate (u1) at (-1,1.25);
	\coordinate (u2) at (+1,1.25);
	%
	\fill [orange!40!white, opacity=0.7] (d1) -- (d2) -- (u2) -- (u1); 
	\draw[thin] (d1) -- (d2) -- (u2) -- (u1) -- (d1);
	%
	\draw[redirected] (X) .. controls +(0,-1) and +(0,-1) .. (Xd);
	%
	\fill (X) circle (0pt) node[above] {{\small $X\vphantom{X^\vee}$}};
	\fill (Xd) circle (0pt) node[above] {{\small ${}^\vee\!X$}};
	\fill[red!80!black] (0,1.1) circle (0pt) node {{\scriptsize $b$}};
	\fill[red!80!black] (0.8,0.2) circle (0pt) node {{\scriptsize $b'$}};
	\fill[black] (0.15,0.8) circle (1.5pt) node[left] {{\scriptsize $\psi_{b}\hspace{-0.3em}$}};
	\fill[black] (-0.8,0.2) circle (1.5pt) node[right] {{\scriptsize $\hspace{-0.1em}{\psi'_{b'}}^{-1}$}};
	\end{tikzpicture}
}
\, 
\end{equation}
without changing the second adjunction. 
As explained above a pivotal bicategory is equivalent to an $SO_2$-dagger bicategory. Hence in particular $\mathcal{B}$ is equipped with an $SO_2$-volution, i.e. a trivialization $S\colon (-)^{LL}\longrightarrow \id_\mathcal{B}$ of the double left adjoint functor, as well as for every object $b\in \mathcal{B}$ a preferred trivialization of the component of this $SO_2$-volution at $b$. Now all other trivializations of $S_b$ differ from the chosen one by an automorphism $\psi \colon \id_b \Longrightarrow \id_b$. Hence the Euler completion adds all other possible $SO_2$-hermitian structures to the 2-groupoid of objects $\mathcal{B}_0$. This is exactly the hermitian completion of the underlying $SO_2$-volutive bicategory. Hence we have proven:
\begin{proposition}
Let $\mathcal{B}$ be a pivotal bicategory its Euler completion and the hermitian completion of its underlying $SO_2$-volutive category agree.  
\end{proposition}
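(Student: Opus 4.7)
The plan is to exhibit an equivalence of bicategories by matching data on each level, and then check that the pivotal/dagger structure transported through this equivalence coincides with the one produced by the Euler completion. The outline in the paragraph preceding the proposition essentially furnishes the proof; what remains is to organise it carefully.

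First I would fix the translation between pivotal bicategories and $SO_2$-dagger bicategories established in~\cite{2DuCH}. Under this, a pivotal structure on $\mathcal{B}$ corresponds to an $SO_2$-volution together with a hermitian structure on the space of objects, i.e.\ a trivialization of the component $S_b$ of the volution $S\colon (-)^{LL}\Longrightarrow \id_\mathcal{B}$ at every object $b$. The hermitian completion $\mathrm{Herm}(\mathcal{B})$ of the underlying $SO_2$-volutive bicategory, as constructed in the paper, has as objects pairs $(b, T)$ with $T$ any trivialization of $S_b$, and the same 1-morphisms and 2-morphisms as $\mathcal{B}$ (after forgetting $T$). The key observation, already essentially stated by the author, is that once a preferred trivialization is chosen, any other trivialization of $S_b$ differs from it by a 2-automorphism $\psi\colon \id_b\Longrightarrow \id_b$, and conversely every such $\psi$ yields a trivialization. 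This sets up a canonical bijection on objects between $\mathrm{Herm}(\mathcal{B})$ and $E(\mathcal{B})$.

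Next I would extend this bijection to a functor of bicategories. On 1-morphisms and 2-morphisms both $E(\mathcal{B})$ and $\mathrm{Herm}(\mathcal{B})$ agree with $\mathcal{B}$, so the assignment on higher cells is the identity and horizontal/vertical composition are manifestly preserved. The only nontrivial content is that the induced $SO_2$-dagger / pivotal structure on $\mathrm{Herm}(\mathcal{B})$, when transported along this equivalence, recovers the twisted adjunction data~\eqref{eq:EulerAdjunctionMaps} of $E(\mathcal{B})$. Here one computes: the evaluation and coevaluation 2-morphisms attached to an object $(b,\psi)$ inside $\mathrm{Herm}(\mathcal{B})$ are obtained from those of $\mathcal{B}$ by whiskering with the difference between the new trivialization and the preferred one, which on nose is whiskering with $\psi$ (and with $\psi^{-1}$ on the opposite side to preserve the snake identity~\eqref{eq:ZorroMoves}). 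Comparing this with~\eqref{eq:EulerAdjunctionMaps} gives exactly the Euler completion's prescription.

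Finally I would check essential surjectivity and full faithfulness. Essential surjectivity on objects is immediate from the identification of objects above. Full faithfulness follows because the hom-bicategories on both sides are those of $\mathcal{B}$. Verifying these isomorphisms of adjunction data is compatible with the second (right) adjunction which is left unchanged in~\eqref{eq:EulerAdjunctionMaps}, and with the coherence data of the volution, is the main technical obstacle: one must carry the modified evaluations and coevaluations through the definitions of $\mathrm{Herm}$ and of the Euler completion and check that the snake identities, pivotality axiom, and dagger coherence match strictly. This is a routine but notation-heavy string-diagram calculation, leveraging that both sides are defined by the same kind of ``twist by $\psi$'' operation on the same underlying bicategory. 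With this identification the two completions coincide as $SO_2$-dagger bicategories, proving the proposition.
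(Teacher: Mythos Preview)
Your proposal is correct and follows the same approach as the paper: identify objects of the hermitian completion with trivializations of $S_b$, observe these form a torsor over automorphisms $\psi\colon \id_b\Rightarrow \id_b$ (hence match objects of $E(\mathcal{B})$), and note that 1- and 2-morphisms coincide on both sides. The paper's argument is in fact just the paragraph preceding the proposition and stops at this identification of objects; your additional verification that the transported adjunction data reproduce the twist in~\eqref{eq:EulerAdjunctionMaps} is a detail the paper leaves implicit.
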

This connection does not hold in higher dimensions, since the hermitian completion adds all possible $SO_n$-hermitian structures on objects, $SO_{n-1}$ hermitian structures on morphisms, and so on. Whereas the Euler completion only leads to mild modifications of the allowed fixed points. In physical terms, it only changes defects by stacking with an invertible phase. 
As above all these structures correspond to an $SO_i$-action, since in an $SO_n$-dagger category we have already trivialized the $SO_i$ action at the appropriate level. Constructing actions of $SO_n$ is hard in general, but for even $i$ we can look at those who are pulled back from an action of $B^{i-1}\Z$ along the Euler class $e_i \colon BSO_i \longrightarrow B^i \Z$. In an $SO_n$-dagger $(n,n)$-category adding only fixed points corresponding to actions of this type corresponds to picking an invertible $n$-endomorphism of the object or morphisms we consider. In our opinion, this construction deserves the name Euler completion and looks similar to the notion introduced in~\cite{carqueville2019orbifolds}.  We want to stress that they cannot exactly be the same because our construction does not change the objects of an $SO_3$-dagger category, but the field theoretical construction does. For any collection of characteristic classes in the right degree we can consider a similar notion of completion. For example $O_4$-dagger categories have a Pontrjagin completion.

\bibliography{biblio.bib}{}

\end{document}